\documentclass{article}

\usepackage{microtype}
\usepackage{graphicx}
\usepackage{subcaption}
\usepackage{booktabs} % for professional tables
\usepackage{siunitx}
\usepackage{dsfont,mathrsfs}
\usepackage{enumitem}
\usepackage{hyperref}

\usepackage{multirow}

\usepackage[preprint]{icml2026}

\usepackage{amsmath}
\usepackage{amssymb}
\usepackage{mathtools}
\usepackage{amsthm}

\usepackage[capitalize,noabbrev]{cleveref}

\theoremstyle{plain}
\newtheorem{theorem}{Theorem}[section]
\newtheorem{proposition}[theorem]{Proposition}
\newtheorem{lemma}[theorem]{Lemma}
\newtheorem{corollary}[theorem]{Corollary}
\theoremstyle{definition}
\newtheorem{definition}[theorem]{Definition}

\theoremstyle{remark}
\newtheorem{remark}[theorem]{Remark}

\newcommand{\smallvec}[2]{%
  \left[ \begin{smallmatrix} 
    #1\vphantom{\pi^1(a^1)} \\ #2\vphantom{\pi^1(a^1)} 
  \end{smallmatrix} \right]%
}

\makeatletter
\DeclareRobustCommand{\cev}[1]{%
  {\mathpalette\do@cev{#1}}%
}
\newcommand{\do@cev}[2]{%
  \vbox{\offinterlineskip
    \sbox\z@{$\m@th#1 x$}%
    \ialign{##\cr
      \hidewidth\reflectbox{$\m@th#1\vec{}\mkern4mu$}\hidewidth\cr
      \noalign{\kern-\ht\z@}
      $\m@th#1#2$\cr
    }%
  }%
}
\makeatother
\usepackage{xcolor}
\newcommand*\blue{\textcolor{blue}}

\newcommand\ALPHABET{\mathsf}
\newcommand\reals{\mathds{R}}

\newcommand\naturalnumbers{\mathds{N}}

\newcommand\PR{\mathds{P}}
\newcommand\EXP{\mathds{E}}
\newcommand\REXP{\mathcal{R}}

\newcommand*\GREEDY{\mathcal{G}}

\newcommand*\risk{\mathcal{R}}

\DeclareMathOperator*{\argmax}{arg\,max}

\usepackage[textsize=tiny]{todonotes}

\icmltitlerunning{Multi-agent RS-CPI}
\newcommand\important[1]{\textbf{\textit{#1}}}

\begin{document}

\twocolumn[
  \icmltitle{Risk-seeking conservative policy iteration with agent-state based policies for Dec-POMDPs with guaranteed convergence}

  % It is OKAY to include author information, even for blind submissions: the
  % style file will automatically remove it for you unless you've provided
  % the [accepted] option to the icml2026 package.

  % List of affiliations: The first argument should be a (short) identifier you
  % will use later to specify author affiliations Academic affiliations
  % should list Department, University, City, Region, Country Industry
  % affiliations should list Company, City, Region, Country

  % You can specify symbols, otherwise they are numbered in order. Ideally, you
  % should not use this facility. Affiliations will be numbered in order of
  % appearance and this is the preferred way.
  \icmlsetsymbol{equal}{*}

  \begin{icmlauthorlist}
    \icmlauthor{Amit Sinha}{mcgill}
    \icmlauthor{Matthieu Geist}{esp}
    \icmlauthor{Aditya Mahajan}{mcgill}
  \end{icmlauthorlist}

  \icmlaffiliation{mcgill}{Department of Electrical and Computer Engineering, McGill University, Montreal, Quebec}
  \icmlaffiliation{esp}{Earth Species Project}

  \icmlcorrespondingauthor{Amit Sinha}{amit.sinha@mail.mcgill.ca}
  % \icmlcorrespondingauthor{Firstname2 Lastname2}{first2.last2@www.uk}

  % You may provide any keywords that you find helpful for describing your
  % paper; these are used to populate the "keywords" metadata in the PDF but
  % will not be shown in the document
  \icmlkeywords{Multi-agent teams, risk-awareness, Dec-POMDPs, Finite-state controllers, partial observability}

  \vskip 0.3in
]

% this must go after the closing bracket ] following \twocolumn[ ...

% This command actually creates the footnote in the first column listing the
% affiliations and the copyright notice. The command takes one argument, which
% is text to display at the start of the footnote. The \icmlEqualContribution
% command is standard text for equal contribution. Remove it (just {}) if you
% do not need this facility.

% Use ONE of the following lines. DO NOT remove the command.
% If you have no special notice, KEEP empty braces:
\printAffiliationsAndNotice{}  % no special notice (required even if empty)
% Or, if applicable, use the standard equal contribution text:
% \printAffiliationsAndNotice{\icmlEqualContribution}

\begin{abstract}
Optimally solving decentralized decision-making problems modeled as Dec-POMDPs is known to be NEXP-complete. These optimal solutions are policies based on the entire history of observations and actions of an agent. However, some applications may require more compact policies because of limited compute capabilities, which can be modeled by considering a limited number of memory states (or agent states). While such an agent-state based policy class may not contain the optimal solution, it is still of practical interest to find the best agent-state policy within the class. We focus on an iterated best response style algorithm which guarantees monotonic improvements and convergence to a local optimum in polynomial runtime in the Dec-POMDP model size. In order to obtain a better local optimum, we use a modified objective which incentivizes risk-seeking alongside a conservative policy iteration update. Our empirical results show that our approach performs as well as state-of-the-art approaches on several benchmark Dec-POMDPs, achieving near-optimal performance while having polynomial runtime despite the limited memory. We also show that using more agent states (a larger memory) leads to greater performance. Our approach provides a novel way of incorporating memory constraints on the agents in the Dec-POMDP problem.
\end{abstract}

\section{Introduction}

Decentralized Partially Observable Markov Decision Processes (Dec-POMDP) model multi-agent sequential decision problems where agents with asymmetric partial observation must coordinate their actions to optimize a shared cumulative reward. Such models arise in various applications including autonomous drone fleets \cite{floriano2019planning}, network load balancing \cite{yao2022multi}, and multi-robot co-ordination \cite{omidshafiei2017decentralized}. Consequently, Dec-POMDPs have been studied extensively across multiple research disciplines including Economics and Organizational Behavior as team theory \cite{Marschak:1955,Radner:1962,MarschakRadner:1972} and Systems and Control as decentralized stochastic control \cite{witsenhausen1971separation,witsenhausen1973standard,Sandell:1978,nayyar2013decentralized} Optimally solving Dec-POMDPs presents three fundamental challenges:

\important{(i)~Information decentralization:}
Let $I^i_t$ denote the information available to agent $i$ makes decisions at time $t$. Typically, in Dec-POMDPs agents lack perfect recall ($I^i_t \not\subset I^i_{t+1}$) and possess asymmetric information ($I^i_t \neq I^j_t$ for $i \neq j$). Thus Dec-POMDPs have \emph{non-classical information structure}~\cite{witsenhausen1971separation}, which precludes the use of standard dynamic programming (DP) due to the absence of a single decision-maker with perfect recall. 

\important{(ii)~Partial observations with agent states:}
Agents receive only partial observations of the environment state and maintain a compressed summary of past observations and actions as a local internal state (called agent state). Due to fixed memory of the agent state, agents must perform lossy state updates, selectively discarding non-salient details to prioritize information critical for decision-making.

\important{(iii)~Coordination among agents.}
Agents must coordinate their policies not only to maximize immediate rewards but also to strategically signal information, thereby facilitating the optimization of future joint actions. Because an update in one agent's policy alters the best response of others, the learning task becomes a joint optimization problem. Consequently, agents must coordinate their learning processes within the joint policy space rather than optimizing each policy independently.

% The updates of one agent's policy affect the other agent's best course of action. Thus, this is a joint optimization problem where the policies of all the agents should involve some co-ordination in the optimization process rather than optimizing for each policy independently.

% They key challenges of Dec-POMDPs are $(i)$ partial observability: agents have to consider the entire local history of their individual actions and observations, which is an information set that increases exponentially with time, $(ii)$ co-ordination: agents need to choose their policies based on the policies of other agents, which makes it a joint optimization problem, and $(iii)$ decentralization of information: different agents have different pieces of information relevant to solving the task at hand.

The complexity of these challenges is reflected in the theoretical hardness of the problem: solving a Dec-POMDP optimally is NEXP-complete \cite{bernstein2002complexity}, and this hardness extends even to finding approximate solutions \cite{rabinovich2003complexity}. Various approaches have been taken in the literature to address these challenges.

% \red{[This needs to be polished...but I feel that we need to make this argument.]}

\paragraph{Related Work:\footnote{A more detailed literature review is given in Appendix~\ref{app:related-work}.}} On the theoretical side, different algorithms for computing optimal solutions have been proposed. The designer's approach \cite{MahajanPhD, witsenhausen1973standard} can be used to used to obtain a dynamic program to obtain the optimal solution for a general finite-state controller. However, the size of such a dynamic program is exponential in the size of the Dec-POMDP spaces and so, it does not scale well to larger settings. In some Dec-POMDPs, there may be some structure in the form of common information between agents that can be exploited by ideas like the common information approach \cite{nayyar2013decentralized, tang2023novel}, but structural assumption of common information does not always hold. As a result these ideas only work for small scale environments.

% Exhaustively enumerating all possible solutions was done using Multi-agent A$^{\ast}$ (MAA$^{\ast}$) \cite{szer2012maa}, which was then generalized by Generalized Multi-agent A$^{\ast}$ (GMAA$^{\ast}$) \cite{oliehoek2008optimal}. In order to make the exponential search more tractable, GMAA$^{\ast}$-ICE \cite{oliehoek2013incremental} proposes lossless incremental clustering to group histories together that satisfy the notion of probabilistic equivalence, and incremental expansion which selectively expands the tree one node at a time.

Feature-based heuristic search value iteration (FB-HSVI) \cite{dibangoye2016optimally} involve the construction of a dynamic program similar to the designer's approach (referred to as the ``occupancy MDP'') and the exponential complexity can be dealt with by clustering and a sliding window of the history. \citet{koops2024approximate} also uses follows a similar outline in their Policy-finding multi-agent A$^\ast$ (PF-MAA$^{\ast}$) algorithm. These approaches involve finding near-optimal history-based solutions. One important limitation of these works is that they are only able to use a small window size, because otherwise the exponential growth of trajectories becomes intractable to deal with.

In certain applications, the agents have a limited number of memory states, which is commonly referred to as an ``agent state'' \cite{dong2022simple}. It then becomes important to find the best possible solution with the class of \emph{agent-state based policies} rather than the class of history-based policies. Ideas like the designer's approach only work well for small problems, while the current history-based approaches (FB-HSVI and PF-MAA$^\ast$) do not currently provide agent-state based policies. \important{Our main contribution is to propose an algorithm which scales well and has theoretical convergence guarantees for finding good agent-state based policies.}

\paragraph{Main Idea:} We are interested in obtaining a locally optimal solution in polynomial runtime. Given a cost function $J$, a locally optimal joint policy of $N$ agents $\pi^\star = (\pi^{\star 1}, \pi^{\star 2}, \cdots, \pi^{\star N})$ is defined as:
\begin{align} \label{eq:basic-PBPO}
    \pi^{\star i} \in \argmax_{\pi^{\star i} \in \Pi^i} J(\pi^{\star}), \quad \forall i.
\end{align}
We consider an approach that uses iterated best response to obtain a local optimum. Given a starting joint policy of $N$ agents $ \pi = ( \pi^1, \pi^2, \cdots, \pi^N)$, this involves repeatedly applying the following update sequentially for each agent $i \in \{1, \cdots, N\}$:
\begin{align} \label{eq:iterated-BR}
    \pi^i \in \argmax_{\pi^i \in \Pi^i} J(\pi^i, \pi^{-i}),
\end{align}
where $\pi^{-i}$ denotes the policy of all agents except $i$. Repeatedly applying such an update (with ties broken with a fixed rule) provably leads to monotonic improvements and convergence to a local optimum. This fundamental idea is similar to the line of work on Joint Equilibrium-based Search for Policies (JESP) \cite{nair2003taming}.

In particular, we combine the following ideas that empirically allow convergence to better local optima (higher $J$):
\begin{enumerate}
    \item \textbf{Sequential Equilibrium}: Keeping the policies of all agents fixed except the one we are optimizing for allows us to write a novel dynamic program in terms of the agent state which enables guaranteed monotonic improvements and convergence.
    \item \textbf{Risk-seeking objective} \cite{shen2013risk}: Using such an objective allows states with high rewards but low probabilities to be perceived with a larger weight than those with lower rewards. This pushes agents to select actions to maximize these marginal probabilities leading to better local optima. We provide a concrete example illustrating this in the next section.
    \item \textbf{Conservative-policy iteration (CPI)} \cite{kakade2002approximately}: Without conservatism the policies generated using iterated best response are deterministic after the first iteration. Consequently, some parts of the state space have zero or negligible marginal distribution, preventing further improvement regardless of how actions are chosen in those states.
\end{enumerate}
We refer to our approach as Risk-Seeking Conservative Policy Iteration (RS-CPI).

Our main contributions are summarized as follows:
\begin{enumerate}[noitemsep, topsep=0pt]
    \item We provide a scalable algorithm RS-CPI that combines sequential equilibria for agent-state based policies with risk-seeking and conservatism while retaining convergence and monotonic improvement properties. \looseness=-1
    \item We demonstrate comparable performance of RS-CPI with the state-of-the-art algorithms from the literature which are (approximately) optimal on several standard Dec-POMDPs.
    \item We show that the performance of RS-CPI improves as the allowed agent-state size is increased.
    \item We also provide ablations that show the adding risk-seeking and conservatism is critical to performance.
\end{enumerate}

\subsection{Motivating risk-seeking incentives}
Standard Markov decision theory typically focuses on maximizing the expected cumulative return, representing a \emph{risk-neutral} objective. In contrast, \emph{risk-aware} formulations evaluate the expected utility of a stochastic return $X$, denoted by $\EXP[U(X)]$. When the utility function $U$ is concave, Jensen's inequality implies $\EXP[U(X)] \le U(\EXP[X])$; in this case, the decision-maker prefers a deterministic return over a random return with the same expectation, a behavior termed \emph{risk-averse}. Conversely, if $U$ is convex, the inequality is reversed and the agent is \emph{risk-seeking}. The special case where $U$ is linear corresponds to the standard risk-neutral paradigm. 

A commonly used risk-aware measure is the \emph{entropic risk} utility, defined as $U(x) = e^{\lambda x}$, where $\lambda \in \mathbb{R}$ parameterizes the degree of risk sensitivity. While the exponential function is always convex, the agent's objective depends on the sign of $\lambda$: a risk-seeking agent ($\lambda > 0$) maximizes $\mathbb{E}[e^{\lambda X}]$, whereas a risk-averse agent ($\lambda < 0$) minimizes $\mathbb{E}[e^{\lambda X}]$. To ensure the risk-aware metric remains on the same scale as the original rewards, we apply the inverse transformation $U^{-1}(\cdot)$ after taking the expectation, yielding the \emph{certainty equivalent} return:
\begin{equation*}
    \REXP(X) = \tfrac{1}{\lambda} \log \EXP[ e^{\lambda X}],
\end{equation*}
where $\REXP(X)$ represents the deterministic return that provides the same utility as the stochastic return $X$.

Risk awareness changes the behavior of the agent.  Consider an agent choosing among $m$ mutually exclusive outcomes with with different success probabilities. A risk-seeking agent (convex $U$) will typically bet on a single outcome that offers the highest potential utility, effectively gambling on the best-case scenario. Conversely, a risk-averse agent (concave $U$) will spread their bet across multiple outcomes, sacrificing maximum potential returns to ensure a more consistent and robust baseline performance. For this reason, there is a rich literature on risk \emph{sensitive} Markov decision processes \cite{shen2013risk}.

Our core insight is that risk-\emph{seeking} behavior naturally encodes a form of \emph{optimism} in the face of uncertainty. By prioritizing high-potential outcomes, this optimistic bias enables iterated best-response methods to escape poor local optima and converge toward superior local solutions.

We consider the iterated best response update \eqref{eq:iterated-BR}, similar to JESP~\cite{nair2003taming}. Iterated best response is guaranteed to converge to a local optimum \eqref{eq:basic-PBPO}. In order to improve the local optimum (higher $J$), we use a risk-seeking modified objective. We provide the following example to illustrate how a risk-seeking objective can converge to a better local optimum.

\textbf{Example}: Consider a single stage two-player game with the same reward function (mentioned in table below). Agent $1$ has actions: $\{ a^1, b^1 \}$ and Agent $2$ has actions $\{ a^2, b^2 \}$.
\[
\begin{array}{cc|cc}
 & & \multicolumn{2}{c}{\text{Agent $2$}} \\
 & & a^2 & b^2 \\
\hline
\text{Agent $1$} & a^1 & 2 & -10 \\
 & b^1 & -10 & 6
\end{array}
\]
Suppose we denote the policy of Agent $1$ by $\pi^1(a^1)$ and that for Agent $2$ as $\pi^2(a^2)$. Clearly the global optimum (best local optimum) is at $\smallvec{\pi^1(a^1)}{\pi^2(a^2)} = \smallvec{0}{0}$ with a reward of $+6$. There is also an inferior local optimum at $\smallvec{\pi^1(a^1)}{\pi^2(a^2)} = \smallvec{1}{1}$ with a reward of $+2$.

A JESP style approach typically involves randomly initializing the policy for all agents and then applying iterative best response for each agent one by one while keeping the other agents' policies fixed at the most recent policy update. For example, if we start with the initialization $\smallvec{\pi^1(a^1)}{\pi^2(a^2)} = \smallvec{0.9}{0.9}$, then we fix Agent $2$'s policy and update Agent $1$'s policy and then vice-versa. This process is repeated until convergence, which follows the trajectory:
\begin{align*}
    \smallvec{0.9}{0.9} \to \smallvec{1}{0.9} \to \smallvec{1}{1} \to \smallvec{1}{1} \to \cdots,
\end{align*}
which gives us the inferior local optimum. Similarly, if we start with $\smallvec{\pi^1(a^1)}{\pi^2(a^2)} = \smallvec{0.1}{0.1}$:
\begin{align*}
    \smallvec{0.1}{0.1} \to \smallvec{0}{0.9} \to \smallvec{0}{0} \to \smallvec{0}{0} \to \cdots,
\end{align*}
it gives us the global optimum. This shows that the initialization determines the equilibrium we converge to when using iterated best response. In the first case, the best action combination has a low probability of $0.01$ which is the reason for convergence to the local optimum.

We can remedy this situation by considering the application of risk-seeking rewards because it assigns a higher weight to higher rewards (optimistic), whereas the usual expected reward is risk-neutral. For example, an entropic risk map uses the logsumexp function with a positive temperature $\lambda$. If we then replace the expectation with this logsumexp function in the iterated response for the first case, we get:
\begin{align*}
    \smallvec{0.9}{0.9} \to \smallvec{0}{0.9} \to \smallvec{0}{0} \to \smallvec{0}{0} \to \cdots,
\end{align*}
and for the second case:
\begin{align*}
    \smallvec{0.1}{0.1} \to \smallvec{0}{0.1} \to \smallvec{0}{0} \to \smallvec{0}{0} \to \cdots,
\end{align*}
which both converge to the global optimum! Thus, it is possible to use risk-seeking rewards to reach better local optima.

\textbf{Notation:} Consider a discrete time system that operates from step $1$ to step $T$. We use $\cev x_t$ to denote all quantities up to (and excluding) time $t$, i.e., $x_{1:t-1}$.  We use $\vec x_t$ to denote all quantities after (and excluding) time $t$, i.e., $x_{t+1:T}$. We use capital letters $X$ to denote random variables and small letters $x$ to denote their realizations. We use sans-serif font $\ALPHABET{X}$ to denote the corresponding spaces. We use subscripts for the time index $t$ and superscripts for the agent index $i$ as $x^i_t$. We use the notation $x_t$ without any superscript to denote the quantity for all agents, i.e., $x_t \coloneqq (x^1, \cdots, x^N)$, where $N$ is the number of agents. We also use bold $\boldsymbol{x^i}$ to denote the quantity over time for agent $i$, i.e., $\boldsymbol{x^i} \coloneqq (x^i_1, \cdots, x^i_T)$. Similarly, we also use $\boldsymbol{x} \coloneqq (x_1, \cdots, x_T)$.

\section{Background}

\subsection{Dec-POMDPs with agent-state based policies} \label{sec:background}

Decentralized partially observable Markov decision processes (Dec-POMDP) model 
sequential decision-making problems involving multiple cooperative agents acting 
under partial observability and asymmetric information.

\textbf{Dec-POMDP Model.} A finite-horizon Dec-POMDP is defined by the tuple
\(
\mathcal{D} = \langle \ALPHABET{N}, \ALPHABET{S}, \ALPHABET{A}, \ALPHABET{Y}, P, r, \zeta_1, T \rangle,
\)
where:

\begin{itemize}[noitemsep, topsep=0pt,leftmargin=1em]
    \item $\ALPHABET{N} \coloneqq \{1,\cdots,N\}$ denotes the set of agents. %, with $N = |\ALPHABET{N}|$.
    \item $\ALPHABET{S}$ is a finite set of environment states. 
    \item $\ALPHABET{A} = \ALPHABET{A}^1 \times \cdots \times \ALPHABET{A}^N$ is the joint action space, 
    where $\ALPHABET{A}^i$ denotes the action set of agent $i \in \ALPHABET{N}$. 
    \item $\ALPHABET{Y} = \ALPHABET{Y}^1 \times \cdots \times \ALPHABET{Y}^N$ is the joint observation space,
    where $\ALPHABET{Y}^i$ denotes the observation set of agent $i \in \ALPHABET{N}$.
    \item $P$ is the joint transition–observation distribution, i.e.,
    \[
    \PR(s_{t+1}, y_{t+1} \mid s_{1:t}, y_{1:t}, a_{1:t}) = P(s_{t+1}, y_{t+1} \mid s_t, a_t)
    %P(s', y' \mid s, a) = \mathbb{P}(s_{t+1}, y_{t+1} = s', y' \mid s_t, a_t = s, a),
    \]
    where $s_t \in \ALPHABET{S}$ denotes the environment state, $y^i_t$ and $a^i_t$ denote the observation and action of agent~$i$, and $y_t = (y^1_t, \dots, y^n_t)$ and $a_t = (a^1_t, \dots, a^n_t)$ denotes the observation and action of all agents, all at time~$t$
    \item $r : \ALPHABET{S} \times \ALPHABET{A} \rightarrow \reals$ is the shared per-step reward function and $r(s_t,a_t)$ is the reward received at time~$t$.
    \item $\zeta_1 \in \Delta(\ALPHABET{S} \times \ALPHABET{Y})$ is joint distribution of the initial state and observations.
    \item $T \in \naturalnumbers$ is the finite decision horizon.
\end{itemize}

% At each time step $t \in \{1, \cdots, T\}$, the system is in state $s_t \in \ALPHABET{S}$.
% Each agent $i \in \ALPHABET{N}$ selects an action $a^i_{t} \in \ALPHABET{A}^i$
% based on its local action and observation history. The agents then receive a common reward
% $r(s_t, a_t)$, where $a_t = (a^1_{t}, \cdots, a^N_{t})$.
% The environment transitions to a new state $s_{t+1}$ and each agent receives an observation
% $y^i_{t+1} \in \ALPHABET{Y}^i$ according to $P$. 

% Each agent $i \in \ALPHABET{N}$ uses a policy $\phi^i_t$ to select the action $a^i_t$ based on its local action and observation history $(a^i_{1:t-1}, y^{i}_{1:t})$, i.e., $\phi^i_t \colon {\ALPHABET{A}^i}^{t-1} \times {\ALPHABET{Y}^{i}}^{t} \to \Delta(\ALPHABET{A}^i)$. We refer to $\boldsymbol{\phi}^i = (\phi^i_1, \cdots, \phi^i_T)$ as the policy of agent~$i$ and $\boldsymbol{\phi} = (\boldsymbol{\phi}^1, \cdots, \boldsymbol{\phi}^N)$ as the joint policy of all agents.

% The performance of a policy $\boldsymbol{\phi}$ is given by the expected total cumulative reward over horizon~$T$:
% \begin{align*}
%     J(\boldsymbol{\phi}) \coloneqq \EXP^{\boldsymbol{\phi}} \Biggl[ \sum_{t=1}^{T} r(S_t, A_t) \Biggr].
% \end{align*}

% The best PBPO joint policy is given by:
% \begin{align*}
%     \boldsymbol{\phi^{\star}} \coloneqq \sup_{\boldsymbol{\phi} \in \boldsymbol{\Phi}} J(\boldsymbol{\phi}).
% \end{align*}

%\subsection{Using agent-state based policies}

\textbf{Agent-state based policies.} We assume that instead of using their entire history of observations and actions, agents act based on an internal state which is commonly referred to as \emph{agent state} in the literature~\cite{dong2022simple}. 

%An alternative to using large history-based policies is to use much more compact memory limited finite-state controllers where the finite-states are commonly referred to as \emph{agent states} in the literature \cite{dong2022simple}. This could potentially result in the loss of optimality, however, it is of practical importance to work with finite memory constraints.

% If the time horizons are not small, then the possible policy space which uses local histories as inputs becomes prohibitively large. This also combines with the possible policies of other agents and exacerbates the problem.Instead one can also consider the use of finite-state controllers where the finite-states are also referred to as agent states and are capable of representing the local history of the agent in a much more compact way, since several different histories can often be mapped to the same agent state. This could potentially result in the loss of optimality, however, it is still of interest to obtain the best joint policy in the class of agent states. Due to the structure in several Dec-POMDPs, it is often possible to reach very close to the optimal performance with a small number of agent states.

Let $\ALPHABET{Z} = \ALPHABET{Z}^1 \times \cdots \times \ALPHABET{Z}^N$ be the space of agent states of all agents, where $\ALPHABET{Z}^i$ is the agent-state space of agent $i \in \ALPHABET{N}$. An \textbf{agent-state based policy} for agent $i$ is $\boldsymbol{\pi}^i = (\pi^i_1, \dots, \pi^i_T)$, where $\pi^i_t \colon \ALPHABET{Y}^i \times \ALPHABET{Z}^i \to \Delta(\ALPHABET{A}^i \times \ALPHABET{Z}^i)$. When following such a policy, at each time each agent samples its action and next agent state as a function of its current observation and agent state, i.e., $(a^i_t, z^i_t) \sim \pi^i_t(y^i_t, z^i_{t-1})$. We denote the joint policy of all agents as $\boldsymbol{\pi} = (\boldsymbol{\pi}^1, \cdots, \boldsymbol{\pi}^N)$ or more compactly as $\boldsymbol{\pi} = (\boldsymbol{\pi}^i, \boldsymbol{\pi}^{-i})$, when emphasizing the perspective of agent~$i$. 

Given a joint policy $\boldsymbol{\pi} \in \boldsymbol{\Pi}$, the system evolves as follows. Each agent~$i$ initializes its agent state $z^i_0$ independently at random according to some arbitrary distribution $\phi^i$. The initial state and observation $(s_1, y_1) \sim \zeta_1$. Then, for each time, agent~$i$ chooses $(a^i_t, z^i_t) \sim \pi^i_t(y^i_t, z^i_{t-1})$. The team of all agents receives the reward $r(s_t, a_t)$ and the environment state and agent's observation evolve as $(s_{t+1}, y_{t+1}) \sim P(\cdot | s_t, a_t)$.

We denote the set of all agent-state based policies for agent~$i$ by $\boldsymbol{\Pi}^i$, with $\boldsymbol{\Pi} = \prod_{i \in \ALPHABET{N}} \boldsymbol{\Pi}^i$ and $\boldsymbol{\Pi}^{-i} = \prod_{j \neq i} \boldsymbol{\Pi}^j$. The performance of a joint policy $\boldsymbol{\pi} \in \boldsymbol{\Pi}$ is given by the expected total cumulative reward over horizon~$T$:
\[
    J(\boldsymbol{\pi}) = \EXP^{\boldsymbol{\pi}} \biggl[ \sum_{t=1}^{T} r(S_t, A_t) \biggr].
\]

There are two notions of optimality in Dec-POMDPs. A joint policy $\boldsymbol{\pi}^\star \in \boldsymbol{\Pi}$ is called \textbf{globally optimal} if
\[
    J(\boldsymbol{\pi}^\star) \ge J(\boldsymbol{\pi}), \quad \forall \boldsymbol{\pi} \in \boldsymbol{\Pi}.
\]
A joint policy $\boldsymbol{\pi}^\star \in \boldsymbol{\Pi}$ is called \textbf{person-by-person optimal (PBPO)} if 
\[
    J(\boldsymbol{\pi}^{i,\star}, \boldsymbol{\pi}^{-i,\star}) \ge J(\boldsymbol{\pi}^i, \boldsymbol{\pi}^{-i,\star}), \quad \forall \boldsymbol{\pi}^i \in \boldsymbol{\Pi}^i, \forall i \in \ALPHABET{N}.
\]
Thus, PBPO corresponds to a Nash equilibrium of Dec-POMDP where all agents share a common utility function.

\subsection{Risk-aware MDPs with entropic risk utility}

In this section, we review the basic theory of risk-aware MDPs with entropic risk utility~\cite{howard1972risk}. Consider an MDP
$\mathcal{M} = \langle \ALPHABET{S}, \ALPHABET{A}, P, r, \zeta_1, T \rangle$, where the the components are similar to those defined for Dec-POMDPs. To distinguish with the notation for Dec-POMDPs, we use $\boldsymbol{\psi} = (\psi_1, \dots, \psi_T)$ to denote the policy, where $\psi_t \colon \ALPHABET{S} \to \Delta(\ALPHABET{A})$. The performance of a policy $\boldsymbol{\psi}$ from an initial time $t$ to final time $T$ is defined as
\begin{align*}
    J^{\risk}_{1:T}(\boldsymbol\psi) &= 
    \REXP\biggl[ \sum_{t=1}^T r(S_t, A_t) \biggr] \\
    &= 
    \tfrac{1}{\lambda} \log \EXP^{\psi} \biggl[  e^{\lambda \sum_{\tau=t}^T r(S_\tau, A_\tau) } \biggr].
\end{align*}
% We want to find a policy $\boldsymbol{\psi}$ that maximizes $J_T(\psi)$, i.e.,
% \begin{align*}
%     \psi^{\star} = \argmax_{\psi} J_T(\psi).
% \end{align*}

% \subsection{Dynamic Programming Decomposition}

% Consider the two steps of:
% \begin{enumerate}
%     \item Policy Evaluation
%     \item Policy Optimization
% \end{enumerate}

For a given policy, the risk-aware expected performance can be computed using dynamic programming.
Define $V_{T+1}^{\boldsymbol{\psi}}(s) \coloneqq 0$, for any state $s$. Then for any $t \in \{ T, \cdots, 1\}$ recursively define:
\begin{align*}
    Q^{\boldsymbol{\psi}}_t(s,a) &= \frac{1}{\lambda} \log \biggl( \sum_{s_{+}} P(s_{+} | s,a) 
    e^{\lambda [ r(s,a) + V^{\boldsymbol{\psi}}_{t+1} (s_{+})] } \biggr), \\ 
    V^{\boldsymbol{\psi}}_t(s) &= \frac{1}{\lambda} \log \biggl( \sum_{a} \psi_t(a \mid s) e^{\lambda Q^{\boldsymbol{\psi}}_t(s,a)}\biggr).
\end{align*}
Then,
\[
  J^{\risk}_{t:T}(\boldsymbol{\psi}) = \frac{1}{\lambda} \log \biggl( \sum_{s,a} \PR^{\cev \psi_t}(S_t = s) e^{\lambda V^{\boldsymbol{\psi}}_t(s)}\biggr).
\]

% \begin{align*}
%     V^{\boldsymbol{\psi}}_t(s) &= \frac{1}{\lambda} \log \left( \sum_{s_{+}, a} P(s_{+} \mid s,a) \psi_t(a \mid s) \exp \left(\lambda [ r(s,a) + V^{\boldsymbol{\psi}}_{t+1} (s_{+}) ] \right)  \right)
% \end{align*}

The optimal value functions can be computed as follows. Define $V^{\star}_{T+1}(s) \coloneqq 0$, for any state $s$. Then for any $t \in \{ T, \cdots, 1\}$ recursively define:
\begin{align*}
    Q^{\star}_t(s,a) &= \frac{1}{\lambda} \log \biggl( \sum_{s_{+}} P(s_{+} | s,a) 
    e^{\lambda [ r(s,a) + V^{\star}_{t+1} (s_{+})] } \biggr), \\ 
    V^{\star}_t(s) &= \max_{\psi_t} \frac{1}{\lambda} \log \biggl( \sum_{a} \psi_t(a \mid s) e^{\lambda Q^{\star}_t(s,a)}\biggr),
    \\
    &= \max_{a} Q^{\star}_t(s,a),
\end{align*}
where the last equality follows from the monotonicity of $e^{\lambda x}$. Thus, the optimal policy $\boldsymbol \psi^{\star} = (\psi^{\star}_1, \dots, \psi^{\star}_T)$ is deterministic and satisfies $\psi^\star_t(s) \in \arg \max_{a} Q^{\star}_t(s,a)$. 

% Note that we can follow the standard comparison theorem for dynamic programming to argue that:
% \begin{align*}
%     V^{\star}_t(s) \geq V^{\boldsymbol{\psi}}_t(s), \forall \psi_t \in \Psi.
% \end{align*}
% Thus, if a policy satisfies the optimality dynamic programming condition, then it is optimal. Note that $\log x$ is monotone increasing. So
% \begin{align*}
%     \argmax_{\psi} \frac{1}{\lambda} \log (F(\psi)) = \argmax_{\psi} F(\psi).
% \end{align*}

% Thus, $\psi^{\star}_t(s) = \argmax_{\psi_t \in \Psi} \sum_{a} \psi_t(a \mid s) \exp(\lambda Q^{\star}(s,a))$, which is same as:
% \begin{align*}
%     \psi^{\star}_t(s) &= \argmax_{a \in \ALPHABET{A}} \exp(\lambda Q^{\star}_t(s, a)) \\
%     &= \argmax_{a \in \ALPHABET{A}} Q^{\star}_t(s, a)
% \end{align*}
% Thus, we have that:
% \begin{align*}
%     V^{\star}_t(s) = \max_{a \in \ALPHABET{A}} Q^{\star}_t(s, a).
% \end{align*}

\section{Main approach} \label{sec:risk-aware-approach}
A risk-seeking MDP involves a risk-seeking objective. We are interested in optimizing the original risk-neutral objective (expected utility) and we want to use the ideas from the risk-aware literature to converge to better local optima through using a risk-seeking objective in the start, but then gradually annealing down the risk-seeking temperature $\lambda$. Ultimately, we are left optimizing for the original risk-neutral objective. Thus, a risk-seeking objective is used to provide a better initialization for the risk-neutral approach. The final converged solution obtained is for the risk-neutral setting.

% The basic approach mentioned in Sec.~\ref{sec:basic-approach} works fine but is not a standard way to incorporate cost-based risk according to the literature \cite{shen2013risk, wang2025planning}. See \cite{gangulyrisk} for how utility-based risk can be incorporated (this is less common, but just involves flipping the signs and changing convexity to concavity).

\subsection{Applying the risk measure to the Dec-POMDP}

For simplicity of notation, we fix $N=2$. However, the results can easily be extended to any general number of agents $N$. Recall that $y_t \coloneqq (y^1_t, y^2_t)$, $z_t\coloneqq (z^1_t, z^2_t)$ and $a_t \coloneqq (a^1_t, a^2_t)$. We will use the notation $\pi_t(a_t, z_t \mid y_t, z_{t-1}) \coloneqq \pi^1_t(a^1_t, z^1_t \mid y^1_t, z^1_{t-1}) \pi^2_t(a^2_t, z^2_t \mid y^2_t, z^2_{t-1})$. 

\textbf{Marginal distribution induced by a policy:} 
The joint environment and agent state process (for all agents) $\{(s_t, y_t, z_{t-1})\}_{t \ge 1}$ is a controlled Markov chain for a given policy $\boldsymbol{\pi}$. 
Let 
\[
{\boldsymbol \zeta}^{\boldsymbol \pi} = (\zeta_1, \zeta^{\cev{\pi}_2}_2, \zeta^{\cev{\pi}_3}_3, \cdots, \zeta^{\cev{\pi}_T}_T)
\]
denote the marginal distribution on $(s_t, y_t, z_{t-1})$ induced by the policy $\boldsymbol{\pi}$, i.e.,
\begin{multline*}
\zeta^{\cev{\pi}_t}_t(s,y,z_{-}) = \PR^{\cev{\pi}_t}(s_t, y_t, z_{t-1} = s,y,z_{-} \mid \zeta_1, \cev{\pi}_t).
\end{multline*}
For ease of notation, we continue to use $\zeta_t^{\cev{\pi}_t}$ to denote the marginal and conditional distributions w.r.t.\ $\zeta_t^{\cev{\pi}_t}$. In particular, for marginals we use $ \zeta_t^{\cev{\pi}_t}(y, z_{-})$ to denote $\sum_{s \in \ALPHABET S} \zeta_t^{\cev{\pi}_t}(s,y,z_{-})$ and  for conditionals we use $\zeta_t^{\cev{\pi}_t}(s | y, z_{-})$ to denote $\zeta_t^{\cev{\pi}_t}(s, y, z_{-}) / \zeta_t^{\cev{\pi}_t} (y, z_{-})$. 

The marginal distribution $\zeta^{\cev{\pi}_t}_t$ can be updated recursively from $\zeta^{\cev{\pi}_{t-1}}_{t-1}$ and $\pi_{t-1}$ as follows: 
\begin{align*}
    &\zeta^{\cev{\pi}_t}_t(s_t,y_t,z_{t-1}) 
    \\
    &= \smash{\smashoperator{\sum_{\left(\substack{s_{t-1},y_{t-1},\\ a_{t-1},z_{t-2}}\right)}}} 
    \zeta^{\cev{\pi}_{t-1}}_{t-1}(s_{t-1},y_{t-1},z_{t-2})
    % \\
    % & \hskip 4em \times 
    P(s_t,y_t \mid s_{t-1},a_{t-1}) 
    \\
    & \hskip 7em \times 
    \pi_{t-1}(a_{t-1} z_{t-1} \mid y_{t-1}, z_{t-2})
    %&\quad \pi^1_{t-1}(a^{1'},z^1_{-} \mid y^{1'}, z^{1'}) \pi^2_{t-1}(a^{2'},z^2_{-} \mid y^{2'}, z^{2'}) .
\end{align*}

\textbf{Applying the risk map function to the original Dec-POMDP objective:}
We define the risk-seeking Dec-POMDP objective by applying the risk map at each time step in a dynamic programming fashion, where the probabilities are obtained from the marginal distribution $\{\zeta^{\cev \pi_t}_t\}_{t=1}^T$ for a given policy $\boldsymbol{\pi}$.

The risk-seeking objective for a time-varying policies $\boldsymbol{\pi^1} = (\pi^1_1, \cdots, \pi^1_T)$ and $\boldsymbol{\pi^2} = (\pi^2_1, \cdots, \pi^2_T)$ with marginal distribution $\{\zeta^{\cev \pi_t}_t\}_{t=1}^T$ can be written as follows:
\begin{align}
    J^{\risk}_{1:T}(\boldsymbol{\pi}) \coloneqq %\sum \zeta_1(s_1, y_1) \pi^1_1(a^1_1 \mid y^1_1, z^1_0) \pi^2_1(a^2_1 \mid y^2_1, z^2_0) r(s_1, a_1) + \nonumber \\
    \risk^{\pi_1}_{\zeta_1}( r(S_1, A_1) + &\risk^{\pi_{2}}( r(S_2, A_2) + \cdots \nonumber \\
    &  + \risk^{\pi_{T}} ( r(S_T, A_T) ) ) ).
\end{align}

We explain how to compute this risk-seeking objective with the help of a centralized Q-function in the next section.

% The goal is to solve for the optimal time-varying policy for the risk-seeking Dec-POMDP objective:
% \begin{align*}
%     \boldsymbol{\pi^{\star}} \coloneqq \sup_{\boldsymbol{\pi} \in \boldsymbol{\Pi}} J^{\risk}(\boldsymbol{\pi}).
% \end{align*}

\subsection{Risk-seeking conservative policy iteration (RS-CPI)}

We now present the CPI approach modified with the influence of risk-seeking rewards. In particular, we want to optimize the original risk-neutral objective, but in order to improve the local optimum that we converge to, we propose to use a risk-seeking objective during the CPI iterations. We start with a value of $\lambda > 0$ and reduce the value to $\lambda \to 0$ as the CPI iterations proceed to completion.

\textbf{Risk-seeking centralized Q-function.}
Consider the centralized Q-function:
\begin{align*}
    &Q^{\vec \pi_T}_{T}(s,y,z_{-},a,z) \coloneqq r(s, a)
\end{align*}
and for $t \in \{T-1, \dots, 1\}$, and with $\lambda > 0$, we have
\begin{align*}
    &Q^{\vec \pi_t}_{t}(s,y,z_{-},a,z) \coloneqq 
    %r(s, a) \quad + \\
    %& \quad \risk^{\pi_{t+1}} [ Q^{\vec \pi_{t+1}}_{t+1}(S_{+},Y_{+},z, \pi^1_{t+1}(Y^1_{+},z^1), \pi^2_{t+1}(Y^2_{+},z^2)) ]\\
    % &= r(s, a^1, a^2) + \smash{\sum_{s_{+},y^1_{+},y^2_{+}}} P(s_{+},y^1_{+},y^2_{+}|s,a^1,a^2) \\
    % &Q^{\vec \pi_{t+1}}_{t+1}(s_{+},y^1_{+},y^2_{+},z^1,z^2, \pi^1_{t+1}(y^1_{+},z^1), \pi^2_{t+1}(y^2_{+},z^2)). \\
     r(s, a) +  \\
    & \quad \frac{1}{\lambda} \log \Big\{ \sum_{\left(\substack{s_{+},y_{+}, \\ a_{+}, z_{+}}\right)} P(s_{+},y_{+}|s,a) \pi_{t+1}(a_{+}, z_{+} \mid y_{+},z) \\
    & \quad \exp(\lambda Q^{\vec \pi_{t+1}}_{t+1}(s_{+},y_{+},z, a_{+}, z_{+})) ) \Big\}.
\end{align*}

% For the set of team policies $\boldsymbol{\pi}$, we can write the policy evaluation $Q$-functions as follows:
% \begin{align}
% Q^{\vec \pi_T}_{T}(s,y^1,y^2,z^1_{-},z^2_{-},a^1,a^2,z^1,z^2) = r(s,a^1,a^2)
% \end{align}
% and for $t \in \{T-1, \dots, 1\}$, we have
% \begin{align}
%     &Q^{\vec \pi_t}_t(s,y^1,y^2,z^1_{-},z^2_{-},a^1,a^2,z^1,z^2) = r(s,a^1,a^2) \nonumber \\
%     & \quad\quad + \smash{\sum_{s_{+},y^1_{+},y^2_{+}}} P(s_{+},y^1_{+},y^2_{+}|s,a^1,a^2) \nonumber \\
%     &Q^{\vec \pi_{t+1}}_{t+1}(s_{+},y^1_{+},y^2_{+},z^1,z^2, \pi^1_{t+1}(y^1_{+},z^1), \pi^2_{t+1}(y^2_{+},z^2)).
% \end{align}

% \subsubsection{Optimizing for each agent}

\textbf{A policy update procedure.}
We now present a procedure for updating a policy. The procedure consists of the following steps:
\looseness=-1
\begin{enumerate}
    \item Given a policy $\boldsymbol{\pi}$, compute the risk-seeking $Q$-functions $\{Q^{\vec{\pi}_t}_t\}_{t=1}^T$ and compute the marginal distribution $\{\zeta^{\cev \pi_t}_t\}_{t=1}^T$.
    \item Compute the sequence of \emph{risk-seeking averaged local $Q$-functions} defined as follows for a given agent $i$ at time $t$:
    \begin{align*}
        \hskip 1em & \hskip -1em
        \bar Q^{\cev \pi_t, \pi^{-i}_t, \vec \pi_t}_t(y^i, z^i_{-}, a^i, z^i) \\
        % &= \risk_{\zeta^{\cev \pi_t, \pi^{-i}_t}_t} \left[ Q^{\vec \pi_t}_t(S,Y,Z_{-},A,Z) \bigm| y^i, z^i_{-}, a^i, z^i\right] \\
        % &= \frac{1}{\lambda} \log \Big\{ \sum_{\scalebox{0.6}{$s, y^{-i}, z^{-i}_{-}, a^{-i}, z^{-i}$}} \zeta^{\cev \pi_t, \pi^{-i}_t}_t(s, y^{-i}, z^{-i}_{-}, a^{-i}, z^{-i} \mid y^i, z^i_{-}) \\
        &= \frac{1}{\lambda} \log \Big\{ \sum \zeta^{\cev \pi_t, \pi^{-i}_t}_t( \cdot \mid y^i, z^i_{-}) \\
        &\quad \exp( \lambda Q^{\vec \pi_t}_t(s,y,z_{-},a,z)) \Big\}.
    \end{align*}
    where the summation is over $s, y^{-i}, z^{-i}_{-}, a^{-i}, z^{-i}$.
    \item Compute an updated policy $\bar \pi^{i}_t$ as follows:
    \begin{equation}\label{eq:pi-update}
        \bar \pi^{i}_t(y^i,z^i_{-}) \in \argmax_{(a^i,z^i)} %\in \ALPHABET A \times \ALPHABET Z}
        \bar Q^{\cev \pi_t, \pi^{-i}_t, \vec \pi_t}_t(y^i, z^i_{-}, a^i, z^i),
    \end{equation}
    and finally replace $\pi^{i}_t$ with $\bar \pi^{i}_t$.
\end{enumerate}
We will use the short-hand notation $\bar \pi^i_t \in \GREEDY^i_t(\boldsymbol{\pi})$ to denote the update procedure for agent $i$ at time $t$. A key feature of the policy update procedure is the following.
\looseness=-1
\begin{proposition}[Performance Improvement Guarantee]\label{prop:improvement}
    For any $\boldsymbol{\pi} \in \Pi$, any $t$ and any agent $i$, let $\bar \pi^i_t \in \GREEDY^i_t(\boldsymbol{\pi})$. Then, we have
    $
    J^{\risk}_{t:T}(\cev \pi_t, \bar \pi^i_t, \pi^{-i}_t, \vec \pi_t)
    \ge
    J^{\risk}_{t:T}(\cev \pi_t, \pi_t, \vec \pi_t).
    $
    
\end{proposition}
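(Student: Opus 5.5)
Writing $\Lambda_t(\boldsymbol\pi):=\exp\bigl(\lambda J^{\risk}_{t:T}(\boldsymbol\pi)\bigr)$, the plan is to express $\Lambda_t$ through the time-$t$ marginal and the centralized $Q$-function, and then observe that only the agent-$i$ factor at time $t$ changes. The tail $\vec\pi_t$ is the same on both sides of the inequality, so $Q^{\vec\pi_t}_t$ is unchanged. The prefix $\cev\pi_t$ is also the same, so the marginal $\zeta^{\cev\pi_t}_t$ is unchanged. This follows directly from the recursive definitions: $\zeta_t$ depends only on $\pi_1,\dots,\pi_{t-1}$, and $Q_t$ only on $\pi_{t+1},\dots,\pi_T$.

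\textbf{Step 1 (representation).} Unrolling the nested risk maps in the definition of $J^{\risk}_{t:T}$, I will show
\begin{align*}
\Lambda_t(\boldsymbol\pi)=\sum_{s,y,z_{-},a,z}\zeta^{\cev\pi_t}_t(s,y,z_{-})\,\pi^i_t(a^i,z^i\mid y^i,z^i_{-})\,\pi^{-i}_t(a^{-i},z^{-i}\mid y^{-i},z^{-i}_{-})\,e^{\lambda Q^{\vec\pi_t}_t(s,y,z_{-},a,z)}.
\end{align*}
The proof is a backward induction using the defining recursion of $Q^{\vec\pi_t}_t$. The $\frac1\lambda\log$ inside $Q_t$ cancels against the $\exp(\lambda\,\cdot)$ at the next level, and $e^{\lambda(r+\frac1\lambda\log X)}=e^{\lambda r}X$, so the nested structure telescopes into a single sum. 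This mirrors the single-agent formula $J^{\risk}_{t:T}=\frac1\lambda\log\sum\PR(S_t=s)e^{\lambda V_t(s)}$ recalled in the background section, with the state replaced by $(s,y,z_{-})$.

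\textbf{Step 2 (factor out agent $i$).} Group the sum by $(y^i,z^i_{-},a^i,z^i)$ and write $\zeta^{\cev\pi_t}_t(s,y,z_{-})=\zeta^{\cev\pi_t}_t(y^i,z^i_{-})\,\zeta^{\cev\pi_t}_t(s,y^{-i},z^{-i}_{-}\mid y^i,z^i_{-})$. The inner sum over $s,y^{-i},z^{-i}_{-},a^{-i},z^{-i}$, weighted by $\pi^{-i}_t$, is exactly $\exp\bigl(\lambda\bar Q^{\cev\pi_t,\pi^{-i}_t,\vec\pi_t}_t(y^i,z^i_{-},a^i,z^i)\bigr)$. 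This gives
\begin{align*}
\Lambda_t(\boldsymbol\pi)=\sum_{y^i,z^i_{-}}\zeta^{\cev\pi_t}_t(y^i,z^i_{-})\sum_{a^i,z^i}\pi^i_t(a^i,z^i\mid y^i,z^i_{-})\,e^{\lambda\bar Q_t(y^i,z^i_{-},a^i,z^i)}.
\end{align*}
By the observation above, the same identity holds with $\pi^i_t$ replaced by $\bar\pi^i_t$, with the same $\zeta_t$ and $\bar Q_t$.

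\textbf{Step 3 (comparison).} For each $(y^i,z^i_{-})$, a convex combination of the values $e^{\lambda\bar Q_t}$ is at most their maximum. Since $\lambda>0$, $x\mapsto e^{\lambda x}$ is increasing, so this maximum is attained at the $\arg\max$ chosen by $\bar\pi^i_t$ in \eqref{eq:pi-update}. Summing with the nonnegative weights $\zeta^{\cev\pi_t}_t(y^i,z^i_{-})$ gives $\Lambda_t(\cev\pi_t,\bar\pi^i_t,\pi^{-i}_t,\vec\pi_t)\ge\Lambda_t(\boldsymbol\pi)$. Applying the increasing map $\frac1\lambda\log$ yields the proposition.

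The main obstacle is Step 1: making precise how the nested risk maps $\risk^{\pi_\tau}$, which the paper defines only informally, unroll into the $Q$-recursion weighted by the marginal. I would fix the convention that $\risk^{\pi_\tau}$ is the entropic map with respect to the conditional law of $(S_\tau,Y_\tau,Z_{\tau-1},A_\tau,Z_\tau)$ given the previous step's variables. The identity then follows by induction on $\tau$ from $T$ down to $t$, together with the Markov property of $\{(s_t,y_t,z_{t-1})\}$. I would also check that conditionals with $\zeta_t(y^i,z^i_{-})=0$ cause no trouble; they don't, because those terms carry zero weight.
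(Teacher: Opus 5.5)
Your proposal is correct and is essentially the paper's proof: it uses the same ingredients, namely the pointwise $\arg\max$ bound on $\bar Q_t$, monotonicity of $x \mapsto e^{\lambda x}$ for $\lambda>0$, bounding a convex combination by its maximum, reweighting by $\zeta^{\cev\pi_t}_t(y^i,z^i_{-})$ and summing, and identifying the resulting joint sum with $\exp(\lambda J^{\risk}_{t:T})$ before applying $\frac{1}{\lambda}\log$. The differences are presentational: the paper runs these steps in reverse order for $i=2$, starting from the pointwise inequality and then un-conditioning, and it simply takes your Step~1 representation of $J^{\risk}_{t:T}$ as given rather than deriving it from the nested risk maps.
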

\begin{proof}
The proof is given in Appendix~\ref{app:main-proof}.
\end{proof}

\textbf{Conservative Policy Update.} \label{subsubsec:cpi-like-update}
Following \citet{kakade2002approximately}, we present a conservative policy update procedure. Given a forgetting factor $\alpha \in [0,1]$, define the conservative policy update operator $\GREEDY^{i, \alpha}_t$ as follows:
\[
    \pi^{i, \alpha}_t \in \GREEDY^{i, \alpha}_t(\boldsymbol{\pi}) \coloneqq (1-\alpha) \pi^i_t + \alpha \GREEDY^i_t(\boldsymbol{\pi}).
\]
Thus, for $\alpha=1$, the conservative policy update is same as $\GREEDY^i_t(\boldsymbol{\pi})$, but for $\alpha < 1$, the conservative policy update mixes between the old policy $\pi_t$ and a policy in $\GREEDY^i_t(\boldsymbol{\pi})$.

% \red{
% Theorem similar to POCPI paper showing monotonic performance improvements
% }

\begin{corollary}\label{cor:improvement}
    The result of Proposition~\ref{prop:improvement} holds for $\bar \pi^i_t = \GREEDY^{i, \alpha}_t(\boldsymbol{\pi})$ for any $\alpha \in [0,1]$. 
\end{corollary}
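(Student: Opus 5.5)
We reduce the corollary to the structure already used for Proposition~\ref{prop:improvement}. The main point is that the time-$t$ objective, viewed as a function of agent~$i$'s time-$t$ policy with everything else fixed, is a monotone transform of a \emph{linear} functional of $\pi^i_t$. Mixing the old policy with a greedy one then gives a convex combination of two values, each at least the old value.

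\textbf{Step 1 (linear representation).} Fix $\cev\pi_t$, $\pi^{-i}_t$ and $\vec\pi_t$. These determine the marginal $\zeta^{\cev\pi_t}_t$ and the $Q$-functions $Q^{\vec\pi_t}_t$; neither depends on $\pi^i_t$. Unrolling the definitions of the risk-seeking objective and the averaged local $Q$-function, we expect, for any $\tilde\pi^i_t$,
\begin{align*}
  e^{\lambda J^{\risk}_{t:T}(\cev\pi_t,\tilde\pi^i_t,\pi^{-i}_t,\vec\pi_t)}
  = \sum_{y^i,z^i_-}\zeta^{\cev\pi_t}_t(y^i,z^i_-)\sum_{a^i,z^i}\tilde\pi^i_t(a^i,z^i\mid y^i,z^i_-)\,
  e^{\lambda \bar Q^{\cev\pi_t,\pi^{-i}_t,\vec\pi_t}_t(y^i,z^i_-,a^i,z^i)} .
\end{align*}
Denote the right-hand side by $F(\tilde\pi^i_t)$. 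This identity is presumably what the appendix proof of Proposition~\ref{prop:improvement} establishes. It follows from the tower property of the exponential certainty equivalent: conditional on $(y^i,z^i_-)$, the other agents' variables $(s,y^{-i},z^{-i}_-,a^{-i},z^{-i})$ are distributed according to $\zeta_t$ together with $\pi^{-i}_t$, independently of agent~$i$'s randomization. Deriving this carefully is the main technical step. I would take it from the proof of the proposition, or re-derive it by expanding the nested $\frac1\lambda\log$ at time $t$ only.

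\textbf{Step 2 (convexity).} $F$ is affine in $\tilde\pi^i_t$, and $\pi^{i,\alpha}_t=(1-\alpha)\pi^i_t+\alpha\bar\pi^i_t$ with $\bar\pi^i_t\in\GREEDY^i_t(\boldsymbol\pi)$. Hence
\[
F(\pi^{i,\alpha}_t)=(1-\alpha)F(\pi^i_t)+\alpha F(\bar\pi^i_t).
\]
By Proposition~\ref{prop:improvement}, and since $x\mapsto\frac1\lambda\log x$ is increasing for $\lambda>0$, we have $F(\bar\pi^i_t)\ge F(\pi^i_t)$. Alternatively, this inequality follows directly from the fact that $\bar\pi^i_t$ places all its mass on maximizers of $\bar Q$. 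Therefore $F(\pi^{i,\alpha}_t)\ge F(\pi^i_t)$. Applying $\frac1\lambda\log$ then gives the claim for every $\alpha\in[0,1]$; the endpoints $\alpha=0,1$ are trivial or are the proposition itself.

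\textbf{Obstacle.} The only delicate point is justifying that $J^{\risk}_{t:T}$ depends on $\pi^i_t$ only through the affine quantity $F$. If the appendix instead proves the proposition by a pointwise argument in $(y^i,z^i_-)$, the same inequality holds pointwise for the mixture. That is because a convex combination of the distribution $\pi^i_t(\cdot\mid y^i,z^i_-)$ with one concentrated on argmax points cannot decrease $\sum\pi\,e^{\lambda\bar Q}$. The argument then goes through unchanged.
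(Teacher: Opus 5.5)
Your proof is correct and follows essentially the same route as the paper. The paper also writes the time-$t$ objective under the mixture $\pi^{i,\alpha}_t$ as the convex combination of the two sums $\sum \zeta^{\cev\pi_t}_t\,\pi_t\, e^{\lambda Q^{\vec\pi_t}_t}$ (greedy and old policy), and then invokes Proposition~\ref{prop:improvement} ``by linearity''. Your version is slightly more careful: you place the linearity on $e^{\lambda J^{\risk}_{t:T}}$ (your $F$) and apply the increasing map $x\mapsto\frac{1}{\lambda}\log x$ only at the end, whereas the paper's display states the convex combination directly for $J_{t:T}$.
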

\begin{proof}
    % Due to linearity of expectations, we have 
    % \[
    %     J^{(\cev \pi_t, \pi^{-i}_t, \pi^{i, \alpha}_t, \vec \pi_t)} 
    %     = 
    %     (1 - \alpha) J^{(\cev \pi_t, \pi^{-i}_t, \bar \pi^i_t, \vec \pi_t)} 
    %     + 
    %     \alpha J^{\GREEDY^{i, \alpha}_t(\boldsymbol{\pi})}
    % \]
    % Therefore, the result follows from Proposition~\ref{prop:improvement}.
    The proof is given in Appendix~\ref{app:cor-proof}.
\end{proof}

\textbf{Risk-seeking algorithm.}
We combine the risk-seeking ideas mentioned earlier with the conservative policy update to design the Risk-Seeking Conservative Policy Iteration (RS-CPI) algorithm described in Algorithm~\ref{algo:policy-improvement-risk-seeking}. This algorithm requires a sequence of decreasing temperatures $\{ \lambda^{(k)} \}_{k=1}^K$ and forgetting rates $\{ \alpha^{(k)} \}_{k=1}^K$.

\begin{algorithm}[!ht]
\caption{RS-CPI for Dec-POMDPs}\label{algo:policy-improvement-risk-seeking}
\textbf{Input}: $\boldsymbol{\pi}^{(0)} = (\pi^{(0)}_1, \cdots, \pi^{(0)}_T)$, $K$ steps\\
\textbf{Input}: Forgetting rates $(\alpha^{(1)}, \alpha^{(2)}, \cdots, \alpha^{(K)})$.\\
\textbf{Input}: Temperature rates $(\lambda^{(1)}, \lambda^{(2)}, \cdots, \lambda^{(K)})$.
\begin{algorithmic}[1]
\STATE $k \gets 0$
\WHILE{$k < K$} \label{algo:DP-outer-loop-risk-seeking}
    \STATE $k \gets k+1$
    \STATE $\lambda \gets \lambda^{(k)}$
    \FOR{$t\gets T:1$} \label{algo:DP-inner-loop-risk-seeking}
        \STATE $\pi^{(k)}_t \gets \GREEDY^{\alpha^{(k)}}_t(\cev{\pi}^{(k-1)}_t, \pi^{(k-1)}_t, \vec{\pi}^{(k)}_t)$
        $\vphantom{\Big(}$
        \label{algo:DP-single-update}
    \ENDFOR
\ENDWHILE
\STATE \textbf{return} $\boldsymbol{\pi}^{(k)}$
\end{algorithmic}
\end{algorithm}

% \red{Theorem saying that algo 1 gives a local optima / Nash}

\begin{proposition}[Convergence Guarantee]\label{prop:convergence}
For any choice of forgetting rate $\boldsymbol{\alpha}$ and fixed $\lambda$, the performance $\{ J^{\risk}_{1:T}(\boldsymbol{\pi}^{(k)})\}_{k \ge 0}$ of the policies generated by RS-CPI converges to a limit $\boldsymbol\pi^\circ = (\pi^\circ_1, \dots, \pi^\circ_T)$ that has the following local optimality property:
\begin{equation}\label{eq:local-optimality}
    \pi^{i,\circ}_t \in \GREEDY^i_t(\boldsymbol{\pi^{\circ}}), \quad \forall t \in \{1, \dots, T\}, \forall i.
\end{equation}
When conservatism is absent (i.e., $\alpha^{(k)} = 1$ for all $k$), 
the convergence takes place in a finite number of steps.
\end{proposition}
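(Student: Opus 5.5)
The plan is a monotone-bounded-sequence argument, followed by a compactness/fixed-point step to identify the limit. Throughout, $\lambda$ is fixed, so $J^{\risk}$ is a single fixed functional on the policy space.

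\emph{Step 1: monotonicity.} Within outer iteration $k$, the inner loop runs $t = T, \dots, 1$. When time $t$ is updated, the future policies $\vec\pi_t$ are already the new ones $\vec\pi^{(k)}_t$, and the past policies $\cev\pi_t$ are still $\cev\pi^{(k-1)}_t$. The intermediate joint policy before the update is $(\cev\pi^{(k-1)}_t, \pi^{(k-1)}_t, \vec\pi^{(k)}_t)$. By Corollary~\ref{cor:improvement}, applied agent by agent at time $t$ for general $\alpha$, the tail objective $J^{\risk}_{t:T}$ does not decrease.

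We then lift this to the full objective $J^{\risk}_{1:T}$. The key observation is that the risk map is defined recursively, and the earlier policies $\cev\pi_t$ are unchanged. Hence $J^{\risk}_{1:T}$ is obtained from the time-$t$ value functions through the monotone maps $x \mapsto \frac1\lambda\log\sum p\, e^{\lambda x}$. Here the weights $p$ are fixed by $\zeta_1$, $P$ and $\cev\pi_t$. So a pointwise (or at least $\zeta^{\cev\pi_t}_t$-weighted) improvement of the time-$t$ quantity propagates to an improvement of $J^{\risk}_{1:T}$.

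\emph{Main obstacle.} The improvement in Proposition~\ref{prop:improvement} is stated only for $J^{\risk}_{t:T}$, whereas the propagation needs it in the form produced inside that proof. Concretely, that form is: for each $(y^i,z^i_-)$, the averaged local value $\frac1\lambda\log\sum \pi^i(\cdot)e^{\lambda \bar Q}$ weakly increases. This is enough, because $J^{\risk}_{1:T}$ equals $\frac1\lambda\log$ of a $\zeta^{\cev\pi_t}_t(y^i,z^i_-)$-weighted sum of $e^{\lambda(\cdot)}$ of these local values, plus fixed past contributions. I would make this decomposition explicit first, reusing the appendix argument. Chaining over $t = T, \dots, 1$ then gives $J^{\risk}_{1:T}(\boldsymbol\pi^{(k)}) \ge J^{\risk}_{1:T}(\boldsymbol\pi^{(k-1)})$.

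\emph{Step 2: convergence of values.} Rewards are bounded and $T$ is finite, so $|J^{\risk}_{1:T}| \le T\max|r|$. A monotone bounded sequence converges.

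\emph{Step 3: the limit policy and local optimality.} The policy space $\boldsymbol\Pi$ is a finite product of simplices, hence compact. Take a convergent subsequence $\boldsymbol\pi^{(k_j)} \to \boldsymbol\pi^\circ$. The maps $\boldsymbol\pi \mapsto Q^{\vec\pi_t}_t$, $\zeta^{\cev\pi_t}_t$, $\bar Q$ and $J^{\risk}$ are continuous, being compositions of polynomials, $\exp$ and $\log$ of positive quantities. The conditioning on $(y^i,z^i_-)$ is harmless where $\zeta$ is positive; elsewhere it does not affect the value.

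Suppose \eqref{eq:local-optimality} failed at some $(t,i)$ on a positive-probability $(y^i,z^i_-)$. Then the greedy step at $\boldsymbol\pi^\circ$ would give a strict improvement $\delta>0$. By continuity, a strict improvement of at least $\delta/2$ also occurs along iterates near $\boldsymbol\pi^\circ$, provided $\alpha^{(k)}$ stays bounded away from $0$. This contradicts the convergence of the values. If $\alpha^{(k)}$ is not bounded away from $0$, one needs instead that $\sum\alpha^{(k)} = \infty$. Should an additional assumption be required there, I would state it explicitly.

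\emph{Finite convergence when $\alpha^{(k)}=1$.} All iterates after the first are then deterministic, mapping $(y^i,z^i_-)$ to $(a^i,z^i)$. There are finitely many such policies. The objective is nondecreasing, and with a fixed tie-breaking rule that keeps the current action whenever it is among the maximizers, no policy can repeat unless the iteration has stopped. Hence the algorithm stabilizes after finitely many outer iterations, at a policy satisfying \eqref{eq:local-optimality}.
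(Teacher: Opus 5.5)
Your outline (monotone bounded values, a contradiction argument for \eqref{eq:local-optimality}, and finiteness when $\alpha^{(k)}=1$) matches the paper's. Your caveat that ``any $\boldsymbol\alpha$'' is too strong is correct: $\alpha^{(k)}\equiv 0$ freezes $\boldsymbol\pi^{(0)}$. The genuine gap is the lift in Step~1, and it cannot be patched. For $\lambda>0$, $J^{\risk}_{1:T}$ is not $\tfrac1\lambda\log$ of a $\zeta^{\cev\pi_t}_t(y^i,z^i_-)$-weighted sum of $e^{\lambda(\cdot)}$ of the local values plus fixed past contributions. The nested entropic objective equals $\tfrac1\lambda\log\EXP^{\boldsymbol\pi}[e^{\lambda\sum_\tau r(S_\tau,A_\tau)}]$ and $(S_t,Y_t,Z_{t-1})$ is a Markov state, so the tower property gives
\[
e^{\lambda J^{\risk}_{1:T}(\boldsymbol\pi)}=\sum_{s,y,z_-} w_t(s,y,z_-)\,e^{\lambda V_t(s,y,z_-)},
\]
where
\[
w_t(s,y,z_-)=\EXP^{\boldsymbol\pi}\Bigl[e^{\lambda\sum_{\tau<t}r(S_\tau,A_\tau)}\,\mathbf 1\{(S_t,Y_t,Z_{t-1})=(s,y,z_-)\}\Bigr].
\]
The past therefore enters multiplicatively, through a reward-tilted occupancy measure. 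Its conditional law given $(y^i,z^i_-)$ is in general not $\zeta^{\cev\pi_t}_t(\cdot\mid y^i,z^i_-)$, which is the law used to build $\bar Q$. The greedy step improves, for each $(y^i,z^i_-)$, a $\zeta^{\cev\pi_t}_t$-conditional average. That is neither a pointwise improvement in $(s,y,z_-)$ nor a $w_t$-weighted one. So it raises $J^{\risk}_{t:T}$ (Proposition~\ref{prop:improvement}) but can lower $J^{\risk}_{1:T}$.

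Here is a concrete instance, which also shows that the statement itself fails for fixed $\lambda>0$.
\begin{itemize}
\item Setup: one effective agent, $|\ALPHABET Z^i|=1$, uninformative observations, $T=2$, $\lambda=1$, $\alpha^{(k)}=1$.
\item Time 1: $s_1\in\{H,L\}$ uniform, $r(H,\cdot)=20$, $r(L,\cdot)=0$, action $u\in\{0,1\}$, and $s_2=(s_1,u)$.
\item Time 2: action $v\in\{0,1\}$ with $r((H,u),v)=\mathbf 1\{u\neq v\}$ and $r((L,u),v)=5\cdot\mathbf 1\{u=v\}$.
\end{itemize}
Under $\zeta_2$, which puts mass $\tfrac12$ on each of $(H,u)$ and $(L,u)$, the time-2 greedy returns $v=u$, since $\tfrac{1+e^5}{2}>\tfrac{1+e}{2}$. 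The time-1 greedy returns $u=1-v$, since $e^{21}+1>e^{20}+e^5$.

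Hence the iterates alternate $(1,0)\mapsto(0,1)\mapsto(1,0)\mapsto\cdots$, and neither point satisfies \eqref{eq:local-optimality} at $t=2$. Moreover, the time-2 step inside each sweep, e.g.\ $(1,0)\to(1,1)$, lowers $J^{\risk}_{1:2}$ from $\log\tfrac{e^{21}+1}{2}\approx 20.31$ to $\log\tfrac{e^{20}+e^5}{2}\approx 19.31$.

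With $\lambda>0$ fixed, the update at time $t$ ascends $J^{\risk}_{t:T}$, a different functional for each $t$. There is no common Lyapunov function, so your Steps~2--3 and your finite-step argument have nothing to rest on. Your argument is valid only in two cases:
\begin{itemize}
\item the risk-neutral case, where $J_{1:T}=J_{1:t-1}+J_{t:T}$ with $J_{1:t-1}$ independent of $\pi_t$, so each greedy step is coordinate ascent on one objective;
\item a modified greedy step that uses the tilted weights $w_t$.
\end{itemize}
The paper's own proof has the same hole. It rests on the identity $J_{1:T}(\boldsymbol\pi)=J_{1:t-1}(\cev\pi_t)+J_{t:T}(\cev\pi_t,\pi_t,\vec\pi_t)$ at the end of Appendix~\ref{app:main-proof}, which fails for the entropic risk.
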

\begin{proof}
    The proof is given in Appendix~\ref{app:prop-convg-proof}.
\end{proof}

\section{Numerical Experiments}
In this section, we present a detailed numerical study to compare the performance of the proposed RS-CPI algorithm with multiple state-of-the-art planning baselines for Dec-POMDPs on multiple benchmark environments.

\subsection{Details of the experiments}

\textbf{Environments.}
% \red{state why these specific environments were chosen}
We use the Dec-POMDPs from the MASPlan benchmark \cite{masplan}, which is a well known standard for evaluating Dec-POMDPs. We consider two small environments and the two largest environments from the MASPlan benchmark to show that our approach generalizes well across domains. Particularly, we use the size of the history-based policy space (given in Table~$3$ of Sec.~$5.1$, p$40$  from \citet{dibangoye2016optimally}) as a measure for the degree of difficulty in solving the Dec-POMDP, and thus, we select the following Dec-POMDPs based on the size of policy spaces:
\begin{enumerate}[noitemsep, topsep=0pt, leftmargin=1.25em]
    \item \textsc{Dec Tiger} \cite{nair2003taming}, in which two agents must coordinate their actions (opening a door) based on individual noisy observations.
    \item \textsc{Recycling Robots} \cite{amato2012optimizing}, in which two robots must independently decide whether to recycle small or large cans or return to a charging station.
    \item \textsc{Cooperative Box Pushing} \cite{seuken2012improved}, in which two agents may either push small boxes or coordinate their actions to push a bigger box.
    \item \textsc{Mars Rovers} \cite{amato2009achieving}, in which multiple rovers must coordinate their trajectories to visit various sites of interest and collect data, under constraints of limited communication.
\end{enumerate}

\textbf{Baselines.} We consider the following baselines:\footnote{We do not consider the use of the designer's approach \cite{MahajanPhD} or the common information approach \cite{nayyar2013decentralized, tang2023novel} as baselines as they do not provide implementations of their methods and the results in their papers are provided only for small Dec-POMDPs with small horizons, while we require comparisons for larger settings.}
\begin{enumerate}[noitemsep, topsep=0pt, leftmargin=1.25em]
    \item Feature-based heuristic search value iteration (FB-HSVI) \cite{dibangoye2016optimally}, which is a state-of-the-art algorithm to find $\epsilon$-optimal \emph{history-based} policies for Dec-POMDPs. It provides near-optimal policies for smaller environments, but is challenging to run for larger environments for large horizons.
    \item Policy-finding multi-agent A$^\ast$ (PF-MAA$^{\ast}$) \cite{koops2024approximate}, which is a computationally efficient heuristic algorithm to find good \emph{history-based} policies in Dec-POMDPs. While PF-MAA$^{\ast}$ lacks optimality guarantees, it provides a computable upper bound to estimate the quality of the solution. 
\end{enumerate}

% We compare the performance of our RS-CPI algorithm with the state-of-the-art $\epsilon$-approximate optimal FB-HSVI \cite{dibangoye2016optimally} algorithm and the more computationally efficient PF-MAA$^{\ast}$\cite{koops2024algorithms, koops2024approximate} algorithm.
% Although PF-MAA$^{\ast}$ is more computationally efficient it does not guarantee that the policy obtained is $\epsilon$-approximate optimal, however, they also provide tight upper-bounds for the Dec-POMDPs under consideration which effectively show that the PF-MAA$^{\ast}$ solutions obtained are very close to being optimal.

% We choose FB-HSVI \cite{dibangoye2016optimally} as a baseline to compare with because it provides near-optimal solutions for smaller problems. However, it is challenging to run this for larger problems. For larger problems we consider PF-MAA$^{\ast}$ \cite{koops2024algorithms, koops2024approximate} as a baseline. Although PF-MAA$^{\ast}$ does not guarantee optimality, \citet{koops2024algorithms, koops2024approximate} provide tight upper-bounds for the Dec-POMDPs under consideration which effectively show that the PF-MAA$^{\ast}$ solutions obtained are very close to being optimal. Another benefit of using PF-MAA$^{\ast}$ is that it can scale well to larger problems.

\textbf{Details for RS-CPI.}
For RS-CPI, we keep a constant $\alpha$ and start from an initial $\lambda > 0$ (thus, a risk-seeking objective) which is then linearly decayed to $0$, reverting to a risk-neutral objective. We then run additional iterations at $\lambda = 0$ until convergence. The values of the constant $\alpha$ and the initial $\lambda$ are independently chosen for each environment by a hyperparameter search (where performance is estimated over $5$ random seeds). Performance can be calculated by the method described in Appendix~\ref{app:joint_policy_eval}.

The initial policy is randomized.  For each constant $\alpha$ and initial $\lambda$, we re-run RS-CPI across $5$ initializations and report the best performance obtained after convergence.

% In order to select the initial $\lambda$ and the constant conservatism $\alpha$ we try $2-3$ different values for each Dec-POMDP and we run $5$ random seeds for each setting. We do this for each horizon reported in the results and select the maximum performance over all the runs. With the given starting value of $\lambda$ we decrease it towards $0$ with a linear schedule (which then becomes the original risk-neutral objective) after which we run a few more iterations with $\lambda=0$. 

\subsection{Experimental Results}

\textbf{Results for $\lvert \ALPHABET{Z}^i \rvert = 2$.}
For our initial set of experiments, we fix $\lvert \ALPHABET{Z}^i \rvert = 2$ and report the results in Table~\ref{tab:results-comparison}. We directly use the performance metrics reported by the respective authors for the baselines. We can see from the comparisons across all environments and all horizons that the performance of our RS-CPI algorithm is comparable despite using only $2$ possible memory states. Since most of the performance metrics reported from using FB-HSVI and PF-MAA$^{\ast}$ are already near-optimal (or optimal), it is not possible to show significant improvements over those baselines.

\begin{table}[!t]
\caption{Performance comparison across benchmarks and planning horizons T. RS-CPI uses $\lvert \ALPHABET{Z}^i \rvert = 2$. The ``-'' entries were not not reported in the corresponding paper.}
\label{tab:results-comparison}
\centering
\small
\begin{tabular}{c S[table-format=4.2] S[table-format=4.2] S[table-format=4.2]}
\toprule
{$T$} & {FB-HSVI} & {PF-MAA$^\ast$} & {\begin{tabular}{@{}c@{}}
RS-CPI\\
\text{(ours)}
\end{tabular}} \\
\toprule
\multicolumn{4}{c}{\textsc{Dec Tiger}} \\ \midrule
6 & 10.38 & 10.38 & 10.38 \\ %\hline
7 & 9.99 & 9.99 &  8.75\\ %\hline
8  & 12.22  & 8.36  & 8.38 \\ %\hline
9 & 15.57 & 15.57 &  15.57\\ %\hline
10  & 15.18  & 15.18  & 13.76\\ %\hline
20  & 28.75  & 29.12  & 27.63 \\ %\hline
50 & 80.66 & 81.03 &  54.11\\ %\hline
100 & 170.90 & 170.91 & 120.30 \\ %\hline
\midrule
\multicolumn{4}{c}{\textsc{Recycling Robots}} \\ 
\midrule
100  & 308.78  & 308.79  & 308.79 \\ %\hline
500 & {--} & 1539.56 & 1539.17 \\ %\hline
1000  & {--}  & 3078.02  & 3077.63 \\ %\hline
2000 & {--} & 6154.94 & 6154.56\\ %\hline
\midrule
\multicolumn{4}{c}{\textsc{Cooperative Box Pushing}} \\ 
\midrule
4  & 98.59 & 98.59  & 98.17 \\ %\hline
5 & 107.72 & 107.73 & 107.64 \\ %\hline
6  & 120.67  & 120.68  & 120.27 \\ %\hline
7 & 156.42 & 155.99 & 155.01 \\ %\hline
8  & 191.22  & 191.23  & 186.04\\ %\hline
9 & 210.27 & 210.26 & 209.61 \\ %\hline
10  & 223.74  & 224.28  & 223.78 \\ %\hline
20  & 458.10  & 474.97  &  469.39\\ %\hline
50 & 1134.70 & 1209.80 &  1192.46\\ %\hline
100 & {--} & 2433.51 & 2352.18 \\ %\hline
\midrule
\multicolumn{4}{c}{\textsc{Mars Rovers}} \\ %\hline
\midrule
6  & 18.62  & 18.62  & 18.62  \\ %\hline
7 & 20.90 & 20.90 & 20.90 \\ %\hline
8  & 22.47 & 22.48  & 22.48 \\ %\hline
9 & 24.31 & 24.32 & 24.32  \\ %\hline
10  & 26.31  & 26.31  & 26.32 \\ %\hline
20  & 52.13  & 49.37  & 50.47  \\ %\hline
50 & 128.95 & 122.56 & 126.43  \\ %\hline
100 & 249.92 & 234.06 & 252.63 \\ %\hline
\bottomrule
\end{tabular}
\end{table}

\textbf{Ablation study with risk-seeking behavior and conservatism.} We next consider the effect of adding risk-seeking and conservatism in the overall performance in Table~\ref{tab:ablation-experiments-risk-seeking-conservatism}. We perform the experiments under the same hyperparameter settings (i)~with CPI only, (ii)~with risk-seeking only, (iii)~without CPI and risk-seeking and compare it with our resulting algorithm RS-CPI that combines all of them. Clearly the ablations show that it is critical to jointly use CPI and risk-seeking together as we can see that performance deteriorates when we remove either/both of them.

\begin{table}[t]
\caption{Ablations with risk-seeking and conservatism for the different planning horizons $T$. All experiments use $\lvert \ALPHABET{Z}^i \rvert = 2$.}
\label{tab:ablation-experiments-risk-seeking-conservatism}
\centering
\small
\begin{tabular}{c S[table-format=4.2] S[table-format=4.2] S[table-format=4.2] S[table-format=4.2]}
\toprule
{$T$} & {CPI only} & {RS only} & {No CPI + No RS} & {RS-CPI}
\\ 
\toprule
\multicolumn{5}{c}{\textsc{Dec Tiger}} \\ \midrule
6   & 10.38 & -12.00 & -12.00  & 10.38 \\ %\hline
7   & 8.75 & -14.00 & -14.00 & 8.75 \\ %\hline
8   & 6.50 & -16.00 & -16.00 & 8.38 \\ %\hline
9   & 10.10 & -18.00 & -18.00 & 15.57 \\ %\hline
10  & 8.42 & -20.00 & -20.00 & 13.76  \\ %\hline
20  & 23.17 & -40.00 & -40.00 & 27.63  \\ %\hline
50  & 47.95 & -100.00 & -100.00 & 54.11  \\ %\hline
100 & 91.93 & -200.00 & -200.00 & 120.30  \\ %\hline
\toprule
\multicolumn{5}{c}{\textsc{Recycling Robots}} \\ \midrule
100  & 308.40 & 308.40 & 308.40  & 308.79  \\ %\hline
500  & 1536.21 & 1536.21 & 1536.21 & 1539.17  \\ %\hline
1000 & 3074.56 & 3071.01 & 3071.01 & 3077.63  \\ %\hline
2000 & 6148.05 & 6143.33 & 6143.33 & 6154.56  \\ %\hline
\toprule
\multicolumn{5}{c}{\textsc{Cooperative Box Pushing}} \\ \midrule
4   & 19.43 & 18.70 & 19.43 & 98.17  \\ %\hline
5   & 35.28 & 26.39 & 35.28 & 107.64  \\ %\hline
6   & 78.56 & 35.94 & 78.37 & 120.27  \\ %\hline
7   & 82.73 & 36.96 & 79.32 & 155.01  \\ %\hline
8   & 53.00 & 44.17 & 53.00 & 186.04  \\ %\hline
9   & 96.55 & 53.60 & 95.63 & 209.61  \\ %\hline
10  & 55.95 & 55.20 & 103.25 & 223.78  \\ %\hline
20  & 124.03 & 115.44 & 157.79 & 469.39  \\ %\hline
50  & 435.48 & 294.58 & 404.04 & 1192.46  \\ %\hline
100 & 771.35 & 597.68 & 753.44 & 2352.18  \\ %\hline
\toprule
\multicolumn{5}{c}{\textsc{Mars Rovers}} \\ \midrule
6   & 10.89 & 18.62 & 10.75 & 18.62  \\ %\hline
7   & 12.63 & 18.42 & 12.60 & 20.90  \\ %\hline
8   & 14.34 & 18.22 & 12.97 & 22.48  \\ %\hline
9   & 16.16 & 18.64 & 15.83 & 24.32  \\ %\hline
10  & 18.02 & 24.68 & 17.74 & 26.32  \\ %\hline
20  & 35.80 & 49.37 & 35.72 & 50.47  \\ %\hline
50  & 89.37 & 124.31 & 88.21 & 126.43  \\ %\hline
100 & 178.60 & 249.29 & 178.40 & 252.63  \\ %\hline
\bottomrule
\end{tabular}
\end{table}

\textbf{Ablation study with increasing agent state size $\lvert \ALPHABET{Z}^i \rvert$.}
It is natural to expect greater performance if we allow larger agent state sizes. Thus, we perform this experiment for $\lvert \ALPHABET{Z}^i \rvert = 1,2,3,4$. Note that $\lvert \ALPHABET{Z}^i \rvert = 1$ corresponds to the open-loop policy (reactive policy). We can see from the results that using more memory helps most in the case of Dec-Tiger, where the reactive policy performs poorly. However, surprisingly, the reactive policies on the Cooperative Box Pushing and Mars Rovers Dec-POMDPs perform well and adding memory to these agents does not lead to a significant performance increase. We give intuitions on the obtained reactive policies for \textsc{Cooperative Box Pushing} and \textsc{Mars Rovers} in Appendix~\ref{app:reactive-policy-descriptions} and we highlight how it is possible to obtain reactive policies that are near-optimal for these Dec-POMDPs.

This suggests that while these Dec-POMDPs are significantly larger, they are still simpler than the smaller Dec-Tiger Dec-POMDP in the sense that they can consider simplified reactive policies without having to worry about maintaining an explicit agent state representation. As a result, this greatly simplifies the policy search process in Cooperative Box Pushing and Mars Rovers Dec-POMDPs which means that apart from having large state, observation and action spaces --- these problems are not as hard as one might think. This highlights the requirement for more challenging benchmarks for Dec-POMDPs, not only in the sense of larger spaces, but also in terms of agents requiring memory to perform well.

\textbf{Polynomial space and runtime algorithm.}
The RS-CPI algorithm has polynomial runtime and memory in the size of the state, observation, action, agent-state spaces of the individual agents. While this is still large due to the product of several spaces being considered, it is still far better in terms of computational complexity than dealing with exponential runtime/memory algorithms. This makes our algorithm much more scalable for remembering events occurring over larger horizons. Note that baseline approaches do show better results in terms of time taken to obtain the optimal policy, however, they use a limited sliding window of size $3-4$ which may not work in general. We provide graphs illustrating our runtime and memory scaling over agent state size $Z$ and the horizon $T$ in Appendix~\ref{app:memory-runtime-analysis}.

\begin{table}[t]
\caption{Table for comparison of number of agent states $\lvert \ALPHABET{Z}^i \rvert$ with performance. We did not perform experiments for \textsc{Mars Rovers} for $\lvert \ALPHABET{Z}^i \rvert = 4$.}
\label{tab:results-z-ablation}
\centering
\small
\begin{tabular}{c S[table-format=4.2] S[table-format=4.2] S[table-format=4.2] S[table-format=4.2]}
\toprule
{$T$} & {$\lvert \ALPHABET{Z}^i \rvert =1$} & {$\lvert \ALPHABET{Z}^i \rvert =2$} & {$\lvert \ALPHABET{Z}^i \rvert = 3$} & {$\lvert \ALPHABET{Z}^i \rvert = 4$}  \\ 
\toprule
\multicolumn{5}{c}{\textsc{Dec Tiger}} \\ \midrule
20 & -31.77 & 27.63  & 31.58  & 30.69  \\ %\hline
50 & -80.90 & 54.11 & 71.64 & 69.71  \\ %\hline
100 & -123.01 & 120.30 & 145.14 & 152.04 \\ %\hline

\toprule
\multicolumn{5}{c}{\textsc{Cooperative Box Pushing}} \\ \midrule
20 & 448.09 & 469.39  & 472.97  & 474.25  \\ %\hline
50 & 1112.67 & 1192.46 & 1200.06  & 1204.52 \\ %\hline
100 & 2228.57 & 2352.18 & 2410.70 & 2421.33 \\ %\hline

\toprule
\multicolumn{5}{c}{\textsc{Mars Rovers}} \\ \midrule
20 & 50.47 & 52.36  & 50.59 & {--}  \\ %\hline
50 & 126.43 & 126.34 & 127.24 & {--}\\ %\hline
100 & 252.63 & 251.95  & 253.93 & {--} \\ %\hline
\bottomrule
\end{tabular}
\end{table}

\section{Conclusions}

We propose an algorithm RS-CPI that combines risk-seeking and conservatism in a novel way to find policies which operate with finite memory constraints. We show in existing Dec-POMDP benchmarks that our approach performs comparably to the state-of-the-art approaches despite using limited memory. In addition, we show ablations with the performance of our algorithm when increasing the agent state size and we observe that performance increases as more memory is allowed. Another interesting observation is that reactive policies perform better than expected in certain Dec-POMDPs which are considered hard because of their large spaces. While being larger does make a Dec-POMDP harder to solve, another important criteria to gauge the difficulty of a Dec-POMDP should be the amount of memory required by agents to perform optimally. If the agents need less memory, then the Dec-POMDP is much simpler to solve, because the search space of policies is much smaller and simpler.

\bibliography{references}
\bibliographystyle{icml2026}

%%%%%%%%%%%%%%%%%%%%%%%%%%%%%%%%%%%%%%%%%%%%%%%%%%%%%%%%%%%%%%%%%%%%%%%%%%%%%%%
%%%%%%%%%%%%%%%%%%%%%%%%%%%%%%%%%%%%%%%%%%%%%%%%%%%%%%%%%%%%%%%%%%%%%%%%%%%%%%%
% APPENDIX
%%%%%%%%%%%%%%%%%%%%%%%%%%%%%%%%%%%%%%%%%%%%%%%%%%%%%%%%%%%%%%%%%%%%%%%%%%%%%%%
%%%%%%%%%%%%%%%%%%%%%%%%%%%%%%%%%%%%%%%%%%%%%%%%%%%%%%%%%%%%%%%%%%%%%%%%%%%%%%%
\newpage
\appendix
\onecolumn
\section{Exact policy evaluation for agent-state based polices in polynomial time} \label{app:joint_policy_eval}

In this section, we show how policy evaluation can be done in a Dec-POMDP for a set of joint agent-state based policies. 
The key intuition is Lemma~\ref{lem:Markov}.
\begin{lemma}\label{lem:Markov}
    For any joint agent-state based policy $\boldsymbol{\pi}$, the process $\{(S_t, Z_t)\}_{t \ge 1}$ is Markov. The processes $\{(S_t, Z_t, A_t)\}_{t \ge 1}$, $\{(S_t, Y_t, Z_t)\}_{t \ge 1}$ and $\{(S_t, Y_t, Z_t, A_t)\}_{t \ge 1}$ are also Markov.
\end{lemma}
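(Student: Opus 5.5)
The plan is to build one explicit one-step transition kernel for the richest process, $\{(S_t,Y_t,Z_t,A_t)\}_{t\ge1}$, and then get the other three processes from it by summing out coordinates. Throughout I use the paper's timing: at time $t$ each agent draws $(a^i_t,z^i_t)\sim\pi^i_t(y^i_t,z^i_{t-1})$, after which $(s_{t+1},y_{t+1})\sim P(\cdot\mid s_t,a_t)$. For a realized history $h_t=(s_{1:t},y_{1:t},z_{0:t},a_{1:t})$ I will write the joint law of the next block using the chain rule and the two primitive independence properties of the model. These are the dynamics assumption $\PR(s_{t+1},y_{t+1}\mid s_{1:t},y_{1:t},a_{1:t},z_{0:t})=P(s_{t+1},y_{t+1}\mid s_t,a_t)$, and the fact that the agents' randomization at $t+1$ depends only on $(y^i_{t+1},z^i_t)$. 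This gives
\begin{equation*}
\PR^{\boldsymbol\pi}(s_{t+1},y_{t+1},z_{t+1},a_{t+1}\mid h_t)=P(s_{t+1},y_{t+1}\mid s_t,a_t)\,\pi_{t+1}(a_{t+1},z_{t+1}\mid y_{t+1},z_t).
\end{equation*}
The right-hand side depends on $h_t$ only through $(s_t,z_t,a_t)$. Hence $\{(S_t,Y_t,Z_t,A_t)\}$ is a (time-inhomogeneous) Markov chain.

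\textbf{The process $\{(S_t,Z_t,A_t)\}$.} I will sum the displayed identity over $y_{t+1}$. The result still depends only on $(s_t,z_t,a_t)$, because $y_t$ never enters the kernel. Conditioning on the coarser history $(s_{1:t},z_{0:t},a_{1:t})$ then gives the same kernel by the tower property.

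\textbf{The processes $\{(S_t,Z_t)\}$ and $\{(S_t,Y_t,Z_t)\}$.} Summing over $a_{t+1}$ in the display shows that the pair $(s_{t+1},z_{t+1})$ (respectively the triple $(s_{t+1},y_{t+1},z_{t+1})$) has a law given $h_t$ that depends on $(s_t,z_t,a_t)$. For these lumped processes the statement requires this dependence to collapse to a dependence on $(s_t,z_t)$ (respectively $(s_t,y_t,z_t)$) alone. I will attempt this by integrating $a_t$ against its conditional law given the coarse past $(s_{1:t},z_{0:t})$, and showing that this conditional law equals $\PR(a_t\mid s_t,z_t)$. To do so I will expand it by Bayes' rule using $\pi_t(a_t,z_t\mid y_t,z_{t-1})$ and the joint kernel $P(s_t,y_t\mid s_{t-1},a_{t-1})$, and try to show that the older coordinates cancel between numerator and denominator.

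\textbf{Main obstacle.} I expect this lumpability step to be the hard part, and I am genuinely unsure it goes through. The action $a_t$ is drawn jointly with $z_t$ from $(y_t,z_{t-1})$, and $y_t$ is generated jointly with $s_t$ from $(s_{t-1},a_{t-1})$. So the Bayes ratio may keep a dependence on $(s_{t-1},z_{t-1})$; in that case the two lumped processes would not be Markov without extra structure. My first check will be to compute this conditional for $T=2$ on a toy instance. If the cancellation fails there, the proof of the first and third claims should rely on the paper's own reading of the indexing, in which the agent state paired with $S_t$ is the one in force at the decision (the $z_{t-1}$ appearing in $\zeta_t^{\cev\pi_t}$). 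I would then redo the same chain-rule computation under that convention and state explicitly where it is used.
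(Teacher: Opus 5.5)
Your kernel for $\{(S_t,Y_t,Z_t,A_t)\}$ and its $y_{t+1}$-marginal for $\{(S_t,Z_t,A_t)\}$ are correct, so those two claims are settled. The gap is the lumpability step for $\{(S_t,Z_t)\}$ and $\{(S_t,Y_t,Z_t)\}$, and it cannot be closed. Your suspicion is right: under the paper's timing $(a_t,z_t)\sim\pi_t(y_t,z_{t-1})$, both claims are false. Take a single agent (or give agent~2 singleton spaces) with $\ALPHABET S=\ALPHABET A=\ALPHABET Y=\ALPHABET Z=\{0,1\}$. Let the dynamics be $S_{t+1}=A_t$ with $Y_{t+1}$ an independent fair bit, let $S_1,Y_1,Z_0$ be independent fair bits, and use the deterministic policy $Z_t=Y_t$, $A_t=Z_{t-1}$. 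Then the $Z_t$ are i.i.d.\ fair bits and $S_{t+1}=Z_{t-1}$. Hence $(S_3,Z_3)=(Z_1,Z_3)$ is independent of $Z_2=S_4$, so $\PR(S_4=0\mid S_3,Z_3)=\tfrac12$ while $\PR(S_4=0\mid S_2,Z_2,S_3,Z_3)\in\{0,1\}$. Since $Y_t=Z_t$ here, the same example breaks $\{(S_t,Y_t,Z_t)\}$. This is exactly the obstruction you named: $A_t$ is drawn jointly with $Z_t$ from $Z_{t-1}$, so its conditional law given $(S_t,Z_t)$ retains $Z_{t-1}$. What does survive for the pair is a controlled-Markov statement: your summed kernel shows $\PR(S_{t+1},Z_{t+1}\mid S_{1:t},Z_{0:t},A_{1:t})$ depends only on $(S_t,Z_t,A_t)$.

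Your fallback only partly helps. Pair $S_t$ with the agent state in force at the decision. Your chain-rule computation then gives the kernel $\sum_{a}\pi_t(a,z_t\mid y_t,z_{t-1})\,P(s_{t+1},y_{t+1}\mid s_t,a)$, which depends only on $(s_t,y_t,z_{t-1})$. This is $\bar P^{\pi_t}_t$ of Appendix~\ref{app:joint_policy_eval} and the chain $\{(s_t,y_t,z_{t-1})\}$ of Section~\ref{sec:risk-aware-approach}, the only part of the lemma the paper actually uses; the paper states the lemma without proof. But no re-indexing saves the pair. With singleton agent-state spaces all conventions reduce to $\{S_t\}$, and the joint kernel allows $Y_{t+1}=S_t$. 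With $S_{t+1}=A_t$, $A_t=Y_t$ and $S_1,Y_1$ independent fair bits, one gets $S_{t+1}=S_{t-1}$, so $\{S_t\}$ is not Markov. The re-indexed pair $(S_t,Z_{t-1})$ is Markov under extra structure, e.g.\ $P(s',y'\mid s,a)=P(s'\mid s,a)\,O(y'\mid s')$, which makes $Y_t$ conditionally independent of the past given $S_t$. Re-indexing in turn breaks the $(S_t,Z_t,A_t)$ claim you did prove: take $Z_t=Y_t$, constant $A_t$, $Y_{t+1}=S_t$, and i.i.d.\ $S_t$. So the right deliverable is your proof of the two action-augmented claims, plus the triple $(S_t,Y_t,Z_{t-1})$, together with counterexamples for the rest, not a proof of the lemma as stated.
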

Thus, there exists a centralized MDP which has the following time-varying reward-function and dynamics:
\begin{align}
    \bar r^{\pi_t}_t(s,y,z) &= \sum_{a \in \ALPHABET A} \prod_{i=1}^{N} \pi^i_t(a^i|y^i, z^i) r(s,a),  \\
    \bar P^{\pi_t}_t(s',y',z'|s,y,z) &= \sum_{a \in \ALPHABET A} \prod_{i=1}^{N} \pi^i_t(a^i, {z^i}'| y^i, z^i) P(s',y'| s,a) %\IND_{\{{z^i}' = \pi(z^i,{y^i}')\}}.
\end{align}

We can then evaluate the performance of this time-varying Markov chain via the performance evaluation formulas for MDPs. In particular, define
\begin{align*}
    \tilde V(S_0,Y_0,Z_0) &= \bar r^{\pi_1}_1 + \bar P^{\pi_1}_1 \bar r^{\pi_2}_2 + \dots + \bar P^{\pi_1}_1 \bar P^{\pi_2}_2 \cdots \bar P^{\pi_{T-1}}_{T-1} \bar r^{\pi_T}_T.
\end{align*}
to be the initial value function, where $Z_0$ is arbitrary and $(S_0,Y_0)$ are governed by the initial distribution $\zeta_1$. Thus the final performance can be calculated as:
\begin{align*}
    J(\pi) = \sum_{s,y} \zeta_1(s,y) \tilde V(s,y,Z_0).
\end{align*}
Also note that we are only reporting final performance for the risk-neutral problem, however, if one is interested in the risk-seeking problem's performance, then the expectations should be modified by the appropriate function, e.g.: logsumexp function in the case of entropic risk.

\section{Risk-aware MDPs with rewards instead of costs based on \cite{shen2013risk, wang2025planning}}

\subsection{Risk measure}

First consider the definition for the basic risk map from \citet{shapiro2021lectures}, which consists of a mapping from a random variable (reward) to a real number and essentially quantifies the risk of that particular random variable. Consider a finite probability space $(\Omega, \mathcal{F}, \mathcal{P})$: where $\Omega$ is a finite set of outcomes, $\mathcal{F}$ is the sigma-algebra of the outcomes, and $\mathcal{P}$ is the associated probability mass function. Let $\mathcal{L}(\Omega)$ denote the set of finite real-valued functions (random variables) on $\Omega$.
\begin{definition}
Given a finite probability space $(\Omega, \mathcal{F}, \mathcal{P})$, a risk measure $\rho \colon \mathcal{L}(\Omega) \to \reals$ maps a random variable (reward) to a real value quantifying its risk. The risk measure $\rho$ is called a monetary risk measure if it satisfies the following three properties:
\begin{enumerate}
    \item \textbf{Monotonicity}: $\rho(v) \leq \rho(w)$, for all $v,w \in \mathcal(\Omega)$ such that $v \leq w$ a.s.
    \item \textbf{Translation Invariance}: $\rho(v + \lambda) = \rho(v) + \lambda$ for any $v \in \mathcal{L}(\Omega), \lambda \in \reals$.
    \item \textbf{Normalization}: $\rho(0) = 0$.
\end{enumerate}
Furthermore, such a monetary risk measure $\rho$ may also satisfy the following properties:
\begin{enumerate}
    \item \textbf{Convexity}: For all $v,w \in \mathcal{L}(\Omega), \alpha \in [0, 1]$, $\rho(\alpha v + (1-\alpha)w) \leq \alpha \rho(v) + (1-\alpha)\rho(w)$.
    \item \textbf{Concavity}: For all $v,w \in \mathcal{L}(\Omega), \alpha \in [0, 1]$, $\rho(\alpha v + (1-\alpha)w) \geq \alpha \rho(v) + (1-\alpha)\rho(w)$.
    \item \textbf{Positive Homogeneity}: For all $v \in \mathcal{L}(\Omega), \lambda \geq 0$, $\rho(\lambda v) = \lambda \rho(v)$.
\end{enumerate}
If the monetary risk measure $\rho$ satisfies concavity and positive homogeneity properties, then it is called a coherent risk measure.
\end{definition}

\begin{remark}
    Since we are using the reward formulation instead of the standard cost formulation, a higher value given by $\rho$ indicates a more desirable outcome. So ideally we would want to maximize the objective constructed using the risk measure $\rho$.
\end{remark}

Coherent risk measures are common in the finance literature. Concavity is associated with risk-averse behavior and convexity is associated with risk-seeking behavior. See Section~$4.2$ of \citet{shen2013risk} for an intuitive explanation of how concave/convex corresponds to risk-averse/seeking in terms of economics and uncertainty.

\begin{remark}
Note that outside being a monetary risk measure, none of the more advanced properties like concavity and coherency are necessary to consider risk-aware behavior. Although it is convenient to consider concave/convex risk maps to guarantee the appropriate risk-aware behavior everywhere.
\end{remark}

\subsection{Markov decision processes}

A Markov decision process (MDP) models sequential decision-making problems in which a single agent interacts with a stochastic environment over time. A finite-horizon MDP is defined by the tuple
\[
\mathcal{M} = \langle \ALPHABET{S}, \ALPHABET{A}, P, r, \zeta_1, T \rangle,
\]
where:
\begin{itemize}
    \item $\ALPHABET{S}$ is a finite set of environment states.
    \item $\ALPHABET{A}$ is a finite set of actions available to the agent.
    \item $P$ is the state transition kernel, where
    \[
    P(s' \mid s, a) = \mathbb{P}(s_{t+1} = s' \mid s_t = s,\, a_t = a),
    \]
    for all $s,s' \in \ALPHABET{S}$ and $a \in \ALPHABET{A}$.
    \item $r : \ALPHABET{S} \times \ALPHABET{A} \rightarrow \mathbb{R}$ is the reward function.
    \item $\zeta_1 \in \Delta(\ALPHABET{S})$ is the initial state distribution.
    \item $T \in \mathbb{N}$ is the finite decision horizon.
\end{itemize}

At each time step $t = 1,\cdots,T$, the system is in state $s_t \in \ALPHABET{S}$.
The agent selects an action $a_t \in \ALPHABET{A}$, receives a reward $r(s_t, a_t)$,
and the environment transitions to a new state $s_{t+1}$ according to the transition
kernel $P$. A decision making rule (or policy) at each time step can be used to select the action $a_t$ based on the state $s_t$, i.e., $\psi_t \colon \ALPHABET{S} \to \Delta(\ALPHABET{A})$. We refer to the combined decision making rules over all time jointly as the overall policy $\boldsymbol{\psi} = (\psi_1, \cdots, \psi_T)$.

The objective function $J$ for an MDP is the cumulative reward over the finite horizon $T$ and is a function of the policy:
\begin{align*}
    J(\boldsymbol{\psi}) = \EXP^{\boldsymbol{\psi}} \left[ \sum_{t=1}^{T} r(S_t, A_t) \right]
\end{align*}

The objective is to find a policy $\boldsymbol{\psi}$ that maximizes the cumulated reward over time $J(\boldsymbol{\psi})$:
\begin{align*}
    \boldsymbol{\psi^{\star}} \coloneqq \sup_{\boldsymbol{\psi} \in \boldsymbol{\Psi}} J(\boldsymbol{\psi}).
\end{align*}

\subsection{Risk maps}

Given the definition of a risk measure, we can now look at the definition of a risk map $\risk$ applied to an MDP $\mathcal{M}$ as follows.

\begin{definition}
    A risk map $\risk$ is a function that maps each state action pair $(s,a) \in \ALPHABET S \times \ALPHABET A$ to a monetary risk measure on the space $(\ALPHABET S, P(\cdot \mid s, a))$. For a given value function random variable $V$ (over all states), the risk map for $(s,a)$ is given by $\risk(V \mid s,a)$. Suppose we are using a policy $\psi$ at state $s$, then we write this as $\risk^{\psi}(V | s)$. This can be further simplified to $\risk^{\psi}_s(V)$.
\end{definition}

The risk aware objective for a time-varying policy $\boldsymbol{\pi} = (\psi_1, \cdots, \psi_T)$ can then be written as follows:
\begin{align}
    J^{\risk}(\boldsymbol{\psi}) &\coloneqq r^{\psi_1}(S_1) +  \risk^{\psi_1}_{S_1}( r^{\psi_2}(S_2) + \risk^{\psi^{2}}_{S_2}( r^{\psi_3}(S_3) + \cdots + \risk^{\psi^{T-1}}_{S_{T-1}} (r^{\psi_T}(S_T) ) ) ).
\end{align}

The goal is to solve for the optimal time-varying policy for the risk-aware MDP:
\begin{align*}
    \boldsymbol{\psi^{\star}} \coloneqq \sup_{\boldsymbol{\psi} \in \boldsymbol{\Psi}} J^{\risk}(\boldsymbol{\psi}).
\end{align*}

\subsection{Example choices for risk map functions}

The following are some examples from \citet{shen2013risk} for risk map functions $\risk$.
\begin{enumerate}
    \item \textbf{Classical MDP risk neutral}: This is the standard risk-neutral reward function used in MDPs. Since it is linear in the value function $V$, it satisfies concavity and convexity. Since it also satisfies positive homogeneity, it is coherent as well.
    \begin{align*}
        \risk(V \mid s,a) &\coloneqq \EXP^{P} \left[ V \mid s, a \right] = \sum_{s'} P(s' \mid s, a) V(s').
    \end{align*}
    \item \textbf{Entropic map}: This is based on the logsumexp function. The parameter $\lambda \in \reals$ controls the risk-preference of $\risk$. If $\lambda > 0$, then $\risk$ is convex everywhere and therefore risk-seeking everywhere. If $\lambda < 0$, $\risk$ is concave everywhere and therefore risk-averse everywhere. It can also be shown that the limit as $\lambda \to 0$ is the same as the classical MDP risk neutral reward.
    \begin{align*}
        \risk(V \mid s,a) &\coloneqq \frac{1}{\lambda} \log \EXP^{P} \left[ \exp (\lambda V) \mid s, a \right] \\
        &= \frac{1}{\lambda} \log \left\{ \sum_{s'} P(s' \mid s, a) \exp(\lambda V(s')) \right\}.
    \end{align*}
    \item \textbf{Mean-semideviation trade-off}: This involves modifying risk based on deviation from the expected values of rewards. $\lambda \in \reals$ works similar to the entropic map. The notation $x_+$ denotes $\max(x, 0)$ and $k \geq 1$.
    \begin{align*}
        \risk(V \mid s,a) &\coloneqq \EXP^{P} \left[ V \mid s, a \right] + \lambda \EXP^{P} \left[ (V - \EXP^{P}[V \mid s, a] )^k_+  \mid s, a \right]^{\frac{1}{k}}.
    \end{align*}
\end{enumerate}

\section{Related work} \label{app:related-work}

\subsection{Monotonic improvement dynamic programs / policy iteration schemes for POMDPs}
A similar approach to the monotonic improvement DP explored in this paper was developed by \citet{van2025memoryless}, which applies the same idea but for a single agent POMDP to learn a time-varying reactive (considers only the most recent observation as input) policy for the finite horizon case.

\subsection{Risk-aware methods}

Instead of using risk-neutral MDPs, we can consider risk-aware MDPs as \citet{wang2025planning}. This involves modifying the original expected reward to a reward which is mapped through some risk measure. Such risk measures can lead to risk-averse/risk-seeking behaviors depending on the hyper-parameters used.

\citet{shen2013risk} give a very useful overview of general risk measures and how to appropriately apply them for MDPs (since most papers in the space use a cost-based risk-averse approach). They define a risk measure and explain the specific of the properties that a risk measure needs in order to be considered risk-averse or risk-seeking for the MDP case. They look at obtaining time-homogeneous policies for the infinite horizon discounted case and the infinite horizon average cost setting. They show the existence of optimal policies for both these settings and show that it is well defined for the MDP case. While \citet{wang2025planning} also consider time-homogeneous policies in the average cost setting for risk-aware MDPs, the distinction in their work is that they present planning and learning algorithms to obtain provably optimal policies for general risk-aware measures.

% Most of the literature looks at risk-averse cost-based methods. \citet{gangulyrisk} look at the utility-based approach and the risk-seeking setting in particular. They use the entropic value at risk measure and the paper focuses on how to accurately estimate this risk and optimize for it along with a parameterized policy. They use a multi-timescale-based approach to simultaneously learn the entropic value at risk and the parameterized policy. They prove asymptotic as well as finite time convergence results and provide an exhaustive set of experiments.

\subsection{A$^\ast$ variants}

There are several variants of the A$^\ast$ algorithm which involves using a heuristic function to optimistically estimate the remaining utility from any given part of the exponential search tree. This idea was first combined with Dec-POMDPs in Multi-agent A$^{\ast}$ \cite{szer2012maa}. They proposed trying several heuristic such as the MDP heuristic and centralized tighter but harder to compute POMDP heuristic and also a recursive heuristic using MAA$^{\ast}$. They present empirical results on small Dec-POMDPs. This idea is generalized by GMAA$^{\ast}$ \cite{oliehoek2008optimal} which breaks down the overall MAA$^{\ast}$ algorithm into fundamental steps denoted as ``Next'': which controls how partial policies are expanded and also returns their heuristic values; and ``Select'': which chooses the partial policy from the current policy pool to expand. They connect the MAA$^{\ast}$ problem to a tree search over collaborative Bayesian games. The GMAA$^{\ast}$-Incremental Clustering and Expansion \cite{oliehoek2013incremental} algorithm uses lossless incremental clustering to group histories together that satisfy the notion of probabilistic equivalence which is based on having the same joint distribution over the states and histories of the other agents. Once the clustering is done at a certain stage, the clusters themselves can be incrementally updated, rather than updating all the histories from scratch while retaining the lossless property (still an exact search). To further reduce the size of exponential tree search, the propose using incremental expansion which selectively expands the tree one node at a time based on the heuristic function.

\citet{koops2024approximate} introduced the Policy-Finding Multi-agent A$^{\ast}$ (PF-MAA) algorithm, which uses a sliding window of the past observations along with a novel heuristic that periodically uses the state which is tighter than the MDP heuristic. They cluster the sliding window observations \cite{oliehoek2009lossless, oliehoek2013incremental}. In addition, they propose the use of a heuristic for pruning nodes which are less promising in order to reduce the tree size of the exponential search. While an (approximately) optimal solution is not guaranteed by this method, they provide empirical results using large horizons for several Dec-POMDP problems which are very close to the optimal value.

\subsection{Exact and approximate exponential complexity algorithms}

\citet{dibangoye2016optimally} introduce a method called Feature-based Heuristic Search Value Iteration (FB-HSVI) and \citet{MahajanPhD} introduce a method called the designer's approach. They construct an MDP with the state space as the joint state, agent-state space and action space as the space of single step decision rules rather than individual joint actions of each agent. They propose using a dynamic programming formulation to obtain optimal policies. \citet{MahajanPhD} refers to this idea as the designer's approach (originally introduced by \citet{witsenhausen1973standard}), whereas \citet{dibangoye2016optimally} refers to it as the occupancy MDP. A distinction between the two approaches is that the dynamic program provided by \citet{MahajanPhD} gives the optimal solution within the given class of agent-state based policies, whereas \citet{dibangoye2016optimally} gives the history-based optimal solution. These methods have exponential time complexity. However, \citet{dibangoye2016optimally} groups similar histories/decision-rules together and uses finite history windows rather than the full history to make the occupancy MDP more compact. This makes it more efficient in practice in Dec-POMDPs with a lot of compact structure, however, it still remains of exponential time complexity and this would give sub-optimal solutions when there is important information beyond the finite history window. Another notable contribution is the use of feature based heuristics to further reduce the exponential search and provide upper and lower bounds, which lead to an anytime algorithm which is very useful in practice. This allows them to scale the results on several standard Dec-POMDPs.

Another idea is that of the common information approach \cite{nayyar2013decentralized} which relies on common information structure between agents to simplify the size of the centralized MDP. This helps in reducing the overall computation when their is common information, however, this still has exponential runtime complexity. A further variation on the common information approach has been considered in \citet{tang2023novel}.

\subsection{Sequential iterated best response algorithms}

Joint Equilibrium-based Search for Policies (JESP) \cite{nair2003taming} finds locally optimal joint policies by iteratively optimizing one agent's policy while keeping others fixed using a multi-agent-belief system. This approach, particularly the dynamic programming version (DP-JESP), leverages the piece-wise linear and convexity (PWLC) properties of the value function to provide exponential speedups over exhaustive search methods.

% \citet{dibangoye2016optimally} shows how approximately optimal solutions for Dec-POMDPs can be obtained, however, these are of exponential time complexity, and so, it is infeasible to extend these solutions to large time horizons. Our approach does not guarantee convergence to the global optimum (Nash) solution, but it does guarantee convergence to a locally optimal (Nash) solution and has polynomial space and time complexity.

% \subsection{Need to read}

% \cite{muller2021geometry}

\section{Memory and runtime analysis} \label{app:memory-runtime-analysis}

The update function involved in Algorithm~\ref{algo:policy-improvement-risk-seeking} involves storing a Q-table that depends linearly on the horizon $T$, which implies that the memory required by our algorithm is linear in the horizon $T$ of the problem rather than exponential. Furthermore, the runtime of computing the centralized Q-function $Q_t^{\vec \pi_t}$ is polynomial in the state, observation, action spaces and agent-state spaces of the individual agents in the Dec-POMDP, which is a significant reduction from exponential.

We consider Dec-POMDPs which have two agents and we use the the same agent-state size across them. This means that the memory and runtime complexity ends up being polynomial in the size of the agent state that we choose commonly for both agents. We show the memory metrics in the first two rows of Figure~\ref{fig:memory-runtime-results} and the runtime metrics over the next two rows. For each setting we took the mean and standard deviation of the $5$ random seeds considered.

\begin{figure*}[t] 
\centering
\setlength{\tabcolsep}{6pt} % more breathing room
\renewcommand{\arraystretch}{1.0}

\begin{tabular}{ccc}
% ================= Row 1: memory vs T =================
\includegraphics[width=0.30\textwidth]{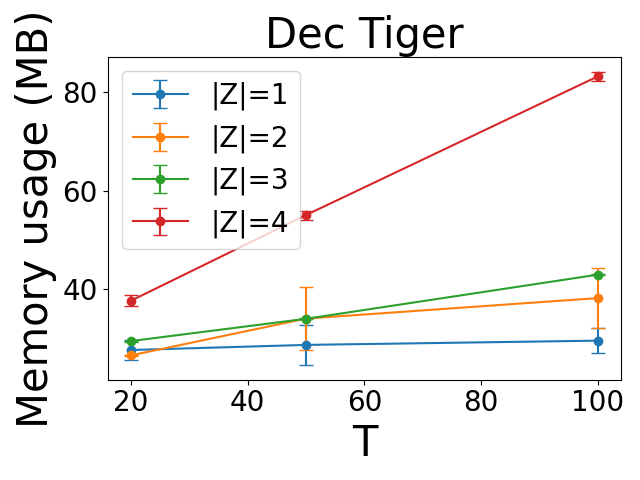} &
\includegraphics[width=0.30\textwidth]{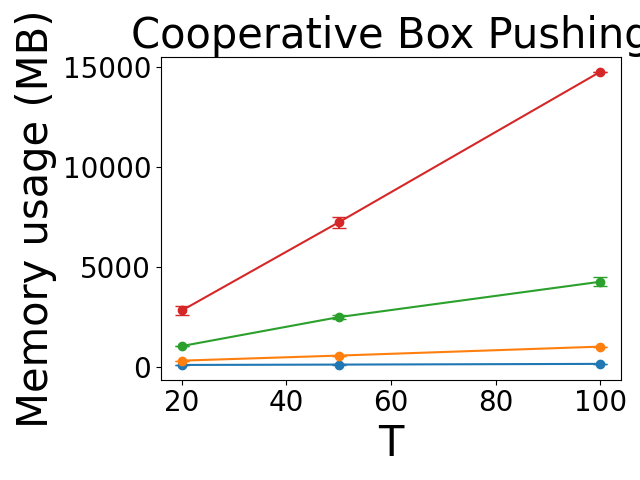} &
\includegraphics[width=0.30\textwidth]{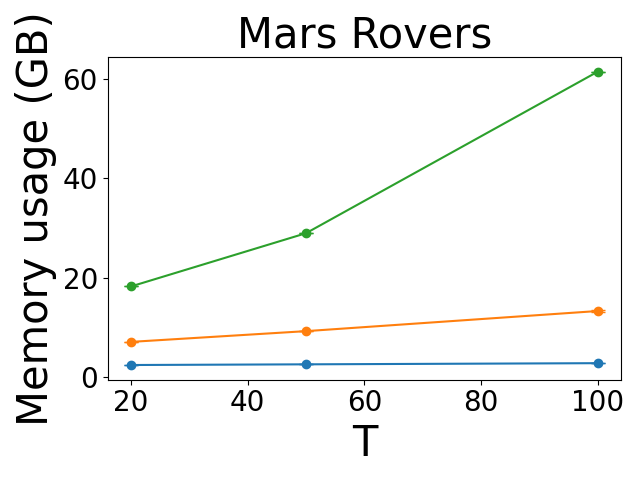} \\

% ================= Row 2: memory vs Z =================
\includegraphics[width=0.30\textwidth]{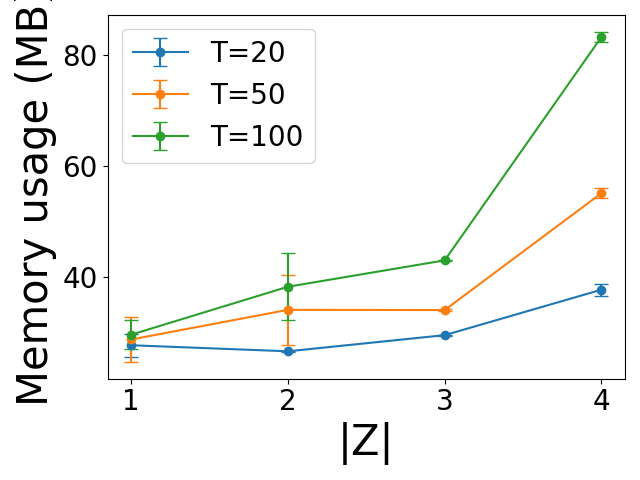} &
\includegraphics[width=0.30\textwidth]{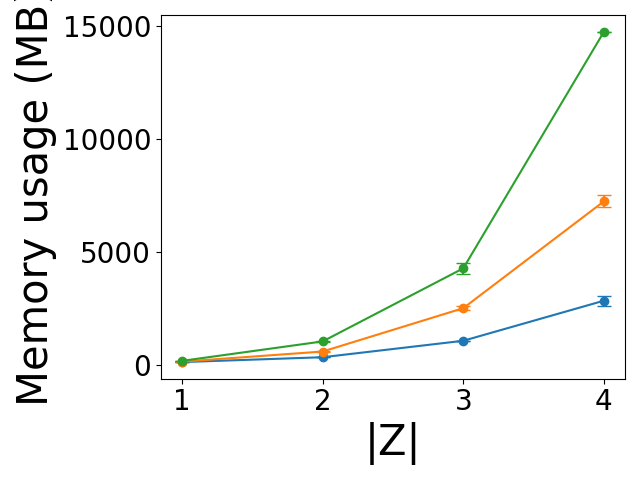} &
\includegraphics[width=0.30\textwidth]{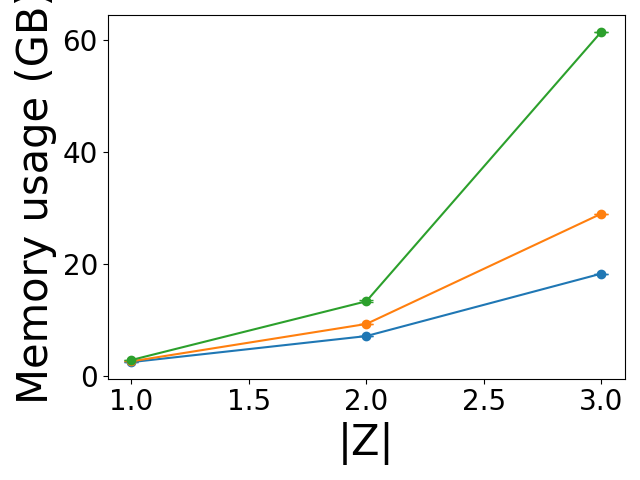} \\

% ================= Row 3: runtime vs T =================
\includegraphics[width=0.30\textwidth]{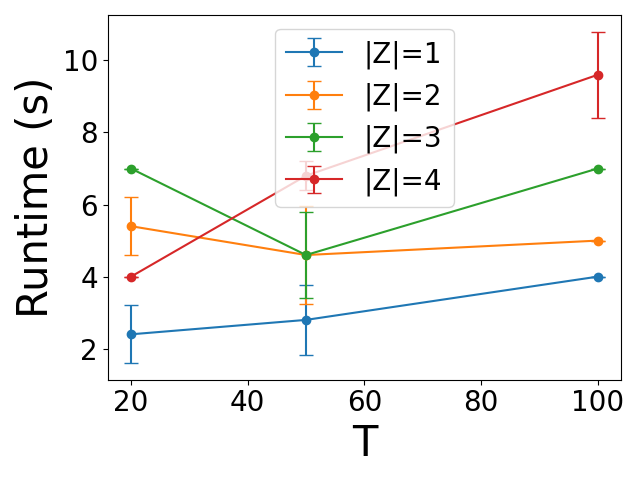} &
\includegraphics[width=0.30\textwidth]{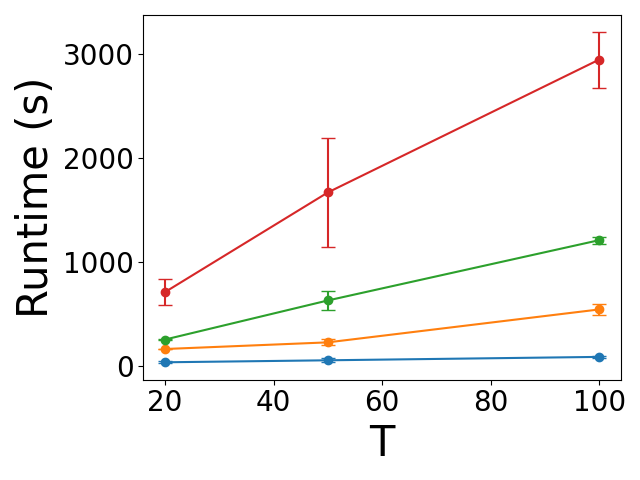} &
\includegraphics[width=0.30\textwidth]{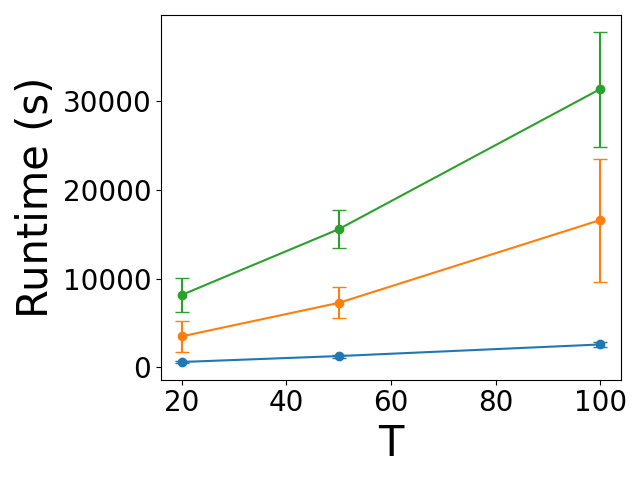} \\

% ================= Row 4: runtime vs Z =================
\includegraphics[width=0.30\textwidth]{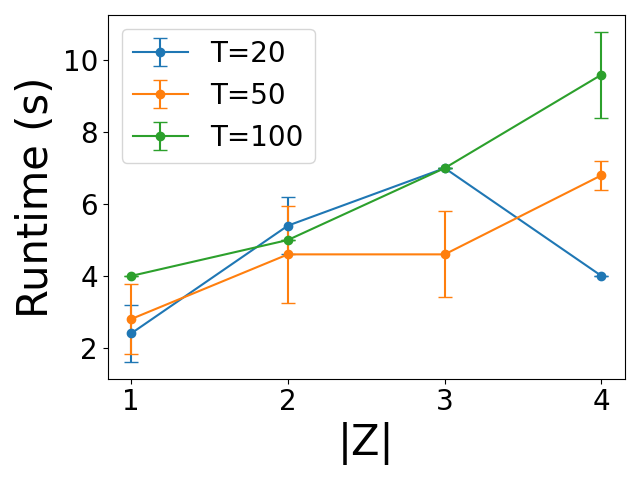} &
\includegraphics[width=0.30\textwidth]{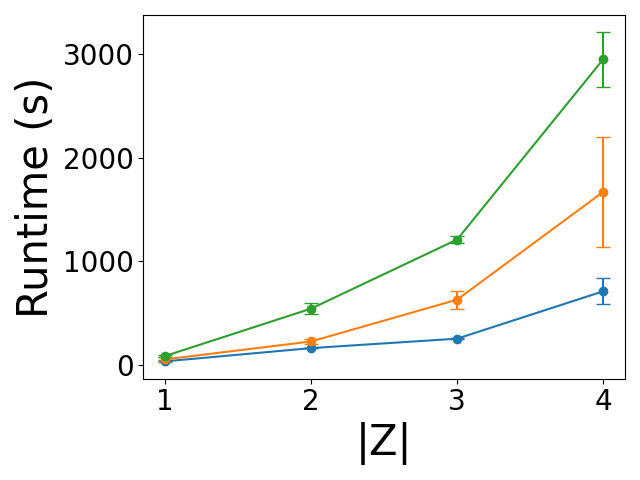} &
\includegraphics[width=0.30\textwidth]{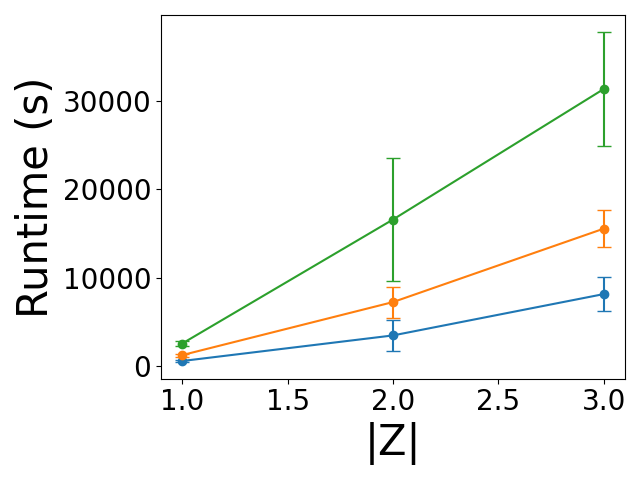}
\end{tabular}

\caption{
Scaling of memory and runtime with horizon $T$ and agent-state size $Z$ across three benchmarks:
Dec-Tiger, Cooperative Box Pushing, and Mars Rovers.
Rows correspond to different metrics, columns correspond to Dec-POMDPs.
}
\label{fig:memory-runtime-results}
\end{figure*}

\section{Interpreting Reactive Policies} \label{app:reactive-policy-descriptions}

\subsection{Cooperative Box Pushing Reactive Policy analysis}

\newcommand\emptyField{Empty}
\newcommand\wall{Wall}
\newcommand\otherAgent{OtherA}
\newcommand\smallBox{SmallB}
\newcommand\largeBox{LargeB}

\newcommand\moveForward{Forward}
\newcommand\turnLeft{TurnL}
\newcommand\turnRight{TurnR}
\newcommand\stay{Stay}

% \begin{figure}
%   \centering
%   \includegraphics[width=0.35\textwidth]{figures/boxpushing-illustration.png}
%   \caption{Cooperative Box Pushing (Image take from \citet{seuken2012improved})}
% \end{figure}

\begin{figure}[H] 
    \centering
    \includegraphics[width=0.3\textwidth]{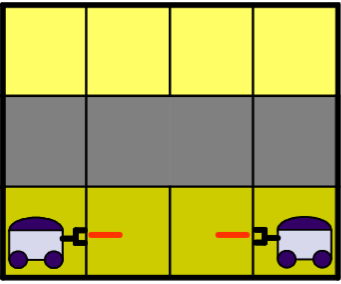}
    \caption{Cooperative Box Pushing (Image take from \citet{seuken2012improved}}
    \label{fig:boxpushing}
\end{figure}

In this Dec-POMDP, there are two agents, one in the left corner and the other in the right corner. They are initially facing the center of the arena (indicated by the red lines in Fig.~\ref{fig:boxpushing}). There is a small box denoted as \smallBox\ immediately adjacent to each of their starting position.

A larger box denoted as \largeBox\ lies in the middle. To reach this box the agents would have to move to the center. Agents can push a box by moving into it. The \smallBox\ can be moved by a single agent, however, the \largeBox\ requires both agents to simultaneously move into it. There is a larger reward for pushing the \largeBox\ and a minor reward for pushing the \smallBox\ , while agents are penalized for bumping into other objects.

The can turn its orientation using \turnLeft\ or \turnRight\ and can move one step in front of it by moving \moveForward\. It can also choose to \stay\ in the same spot (which is useful sometimes for cooperating with the other agent to push the \largeBox\ ). There is also some probability that agents fail to \moveForward\ or \turnLeft\ or \turnRight\ which makes the environment more challenging.

We present our learned RS-CPI reactive policy which is near-optimal in Table~\ref{table:cooperative-boxpushing-reactive-policy}. In order to understand how to read this policy and understand what is happening properly, we provide the following steps where we illustrate what happens at each time step. We trace out the trajectory of what happens when the movement actions of both agents always succeed. We mark this trajectory in bold in the Table~\ref{table:cooperative-boxpushing-reactive-policy}. We abbreviate Agent 1 as A1 (left) and Agent 2 as A2 (right).

\begin{enumerate}
    \item A1 sees \textbf{\emptyField} and chooses \textbf{\moveForward}. \\
    A2 sees \textbf{\emptyField} and chooses \textbf{\moveForward}.
    \item A1 sees \textbf{\otherAgent} and chooses \textbf{\turnLeft}.\\
    A2 sees \textbf{\otherAgent} and chooses \textbf{\turnRight}.\\
    (Both agents are now facing the \textbf{\largeBox})
    \item A1 sees \textbf{\largeBox} and chooses \textbf{\moveForward}.\\
    A2 sees \textbf{\largeBox} and chooses \textbf{\moveForward}.\\
    (Both agents move forward and get large reward for pushing the \textbf{\largeBox})
    \item The environment then gets reset back to the same initial state in this step, the actions do not matter in a reset step.
    \item It is interesting to note that same cycle from steps 1-3 continue in a periodic fashion.
\end{enumerate}

It might also be instructive to observe what happens in case an action fails. For example let's assume that when both agents do \moveForward\ , then A1 fails. We mark the actions in \blue{blue}. Then:
\begin{enumerate}
    \item A1 sees \textbf{\emptyField} and chooses \blue{\moveForward} (but fails).\\
    A2 sees \textbf{\emptyField} and chooses \blue{\moveForward} (success).
    \item A1 sees \textbf{\emptyField} and chooses \blue{\turnLeft}.\\
    A2 sees \textbf{\emptyField} and chooses \blue{\moveForward}.
    \item A1 sees \textbf{\smallBox} and chooses \blue{\moveForward}.\\
    A2 sees \textbf{\otherAgent} and chooses \blue{\turnRight}. \\
    (A small reward is received for pushing the small box successfully)
    \item This step is the rest step again.
    \item And then we get back into the cycle!
\end{enumerate}

So even when the best outcome doesn't work out, there is some contingency for the agents to finish the episode and restart it to try again. In this manner many different contingencies can be worked out, where the resulting actions are just the outcome of a process that overall has a tendency to maximize reward.

\begin{table*}
\centering
\small
\renewcommand{\arraystretch}{1.4}
\begin{tabular}{c|c|c|c|c|c}
\hline
Time &
\emptyField &
\wall &
\otherAgent &
\smallBox &
\largeBox \\
\hline

1 &
\begin{tabular}{c}\textbf{\moveForward} \\ \textbf{\moveForward}\end{tabular} &
\begin{tabular}{c}\turnLeft \\ \turnLeft\end{tabular} &
\begin{tabular}{c}\turnLeft \\ \turnLeft\end{tabular} &
\begin{tabular}{c}\turnLeft \\ \turnLeft\end{tabular} &
\begin{tabular}{c}\turnLeft \\ \turnLeft\end{tabular} \\

2 &
\begin{tabular}{c}\blue{\turnLeft} \\ \blue{\moveForward}\end{tabular} &
\begin{tabular}{c}\turnLeft \\ \turnRight\end{tabular} &
\begin{tabular}{c}\textbf{\turnLeft} \\ \textbf{\turnRight}\end{tabular} &
\begin{tabular}{c}\stay \\ \turnLeft\end{tabular} &
\begin{tabular}{c}\turnLeft \\ \turnLeft\end{tabular} \\

3 &
\begin{tabular}{c}\turnLeft \\ \moveForward\end{tabular} &
\begin{tabular}{c}\turnLeft \\ \turnRight\end{tabular} &
\begin{tabular}{c}\turnLeft \\ \blue{\turnRight}\end{tabular} &
\begin{tabular}{c}\blue{\moveForward} \\ \turnLeft\end{tabular} &
\begin{tabular}{c}\textbf{\moveForward} \\ \textbf{\moveForward}\end{tabular} \\

4 &
\begin{tabular}{c}\turnLeft \\ \moveForward\end{tabular} &
\begin{tabular}{c}\turnLeft \\ \turnRight\end{tabular} &
\begin{tabular}{c}\turnLeft \\ \turnRight\end{tabular} &
\begin{tabular}{c}\moveForward \\ \turnLeft\end{tabular} &
\begin{tabular}{c}\moveForward \\ \moveForward\end{tabular} \\

5 &
\begin{tabular}{c}\textbf{\moveForward} \\ \textbf{\moveForward}\end{tabular} &
\begin{tabular}{c}\turnLeft \\ \turnRight\end{tabular} &
\begin{tabular}{c}\turnLeft \\ \turnRight\end{tabular} &
\begin{tabular}{c}\moveForward \\ \moveForward\end{tabular} &
\begin{tabular}{c}\moveForward \\ \moveForward\end{tabular} \\

6 &
\begin{tabular}{c}\moveForward \\ \moveForward\end{tabular} &
\begin{tabular}{c}\turnLeft \\ \turnRight\end{tabular} &
\begin{tabular}{c}\textbf{\turnLeft} \\ \textbf{\turnRight}\end{tabular} &
\begin{tabular}{c}\stay \\ \stay\end{tabular} &
\begin{tabular}{c}\stay \\ \stay\end{tabular} \\

7 &
\begin{tabular}{c}\moveForward \\ \stay\end{tabular} &
\begin{tabular}{c}\stay \\ \turnRight\end{tabular} &
\begin{tabular}{c}\turnLeft \\ \turnRight\end{tabular} &
\begin{tabular}{c}\moveForward \\ \moveForward\end{tabular} &
\begin{tabular}{c}\textbf{\moveForward} \\ \textbf{\moveForward}\end{tabular} \\

8 &
\begin{tabular}{c}\turnLeft \\ \moveForward\end{tabular} &
\begin{tabular}{c}\turnLeft \\ \turnRight\end{tabular} &
\begin{tabular}{c}\turnLeft \\ \turnRight\end{tabular} &
\begin{tabular}{c}\moveForward \\ \moveForward\end{tabular} &
\begin{tabular}{c}\moveForward \\ \moveForward\end{tabular} \\

9 &
\begin{tabular}{c}\textbf{\moveForward} \\ \textbf{\moveForward}\end{tabular} &
\begin{tabular}{c}\turnLeft \\ \turnRight\end{tabular} &
\begin{tabular}{c}\turnLeft \\ \turnRight\end{tabular} &
\begin{tabular}{c}\stay \\ \stay\end{tabular} &
\begin{tabular}{c}\stay \\ \stay\end{tabular} \\

10 &
\begin{tabular}{c}\moveForward \\ \moveForward\end{tabular} &
\begin{tabular}{c}\turnLeft \\ \turnRight\end{tabular} &
\begin{tabular}{c}\textbf{\turnLeft} \\ \textbf{\turnRight}\end{tabular} &
\begin{tabular}{c}\moveForward \\ \moveForward\end{tabular} &
\begin{tabular}{c}\moveForward \\ \moveForward\end{tabular} \\

11 &
\begin{tabular}{c}\turnLeft \\ \moveForward\end{tabular} &
\begin{tabular}{c}\turnLeft \\ \turnRight\end{tabular} &
\begin{tabular}{c}\turnLeft \\ \turnRight\end{tabular} &
\begin{tabular}{c}\moveForward \\ \moveForward\end{tabular} &
\begin{tabular}{c}\textbf{\moveForward} \\ \textbf{\moveForward}\end{tabular} \\

12 &
\begin{tabular}{c}\moveForward \\ \moveForward\end{tabular} &
\begin{tabular}{c}\turnLeft \\ \turnRight\end{tabular} &
\begin{tabular}{c}\turnLeft \\ \turnRight\end{tabular} &
\begin{tabular}{c}\moveForward \\ \moveForward\end{tabular} &
\begin{tabular}{c}\moveForward \\ \moveForward\end{tabular} \\

13 &
\begin{tabular}{c}\textbf{\moveForward} \\ \textbf{\moveForward}\end{tabular} &
\begin{tabular}{c}\turnLeft \\ \turnRight\end{tabular} &
\begin{tabular}{c}\turnLeft \\ \turnRight\end{tabular} &
\begin{tabular}{c}\stay \\ \stay\end{tabular} &
\begin{tabular}{c}\stay \\ \stay\end{tabular} \\

14 &
\begin{tabular}{c}\moveForward \\ \moveForward\end{tabular} &
\begin{tabular}{c}\turnLeft \\ \turnRight\end{tabular} &
\begin{tabular}{c}\textbf{\turnLeft} \\ \textbf{\turnRight}\end{tabular} &
\begin{tabular}{c}\moveForward \\ \moveForward\end{tabular} &
\begin{tabular}{c}\moveForward \\ \moveForward\end{tabular} \\

15 &
\begin{tabular}{c}\turnLeft \\ \moveForward\end{tabular} &
\begin{tabular}{c}\turnLeft \\ \turnRight\end{tabular} &
\begin{tabular}{c}\turnLeft \\ \turnRight\end{tabular} &
\begin{tabular}{c}\moveForward \\ \moveForward\end{tabular} &
\begin{tabular}{c}\textbf{\moveForward} \\ \textbf{\moveForward}\end{tabular} \\

16 &
\begin{tabular}{c}\moveForward \\ \moveForward\end{tabular} &
\begin{tabular}{c}\turnLeft \\ \turnRight\end{tabular} &
\begin{tabular}{c}\turnLeft \\ \turnRight\end{tabular} &
\begin{tabular}{c}\moveForward \\ \moveForward\end{tabular} &
\begin{tabular}{c}\moveForward \\ \moveForward\end{tabular} \\

17 &
\begin{tabular}{c}\textbf{\moveForward} \\ \textbf{\moveForward}\end{tabular} &
\begin{tabular}{c}\turnLeft \\ \turnRight\end{tabular} &
\begin{tabular}{c}\turnLeft \\ \turnRight\end{tabular} &
\begin{tabular}{c}\moveForward \\ \moveForward\end{tabular} &
\begin{tabular}{c}\moveForward \\ \moveForward\end{tabular} \\

18 &
\begin{tabular}{c}\moveForward \\ \moveForward\end{tabular} &
\begin{tabular}{c}\turnLeft \\ \turnLeft\end{tabular} &
\begin{tabular}{c}\textbf{\turnLeft} \\ \textbf{\turnRight}\end{tabular} &
\begin{tabular}{c}\moveForward \\ \moveForward\end{tabular} &
\begin{tabular}{c}\moveForward \\ \moveForward\end{tabular} \\

19 &
\begin{tabular}{c}\turnLeft \\ \turnRight\end{tabular} &
\begin{tabular}{c}\turnRight \\ \turnLeft\end{tabular} &
\begin{tabular}{c}\turnLeft \\ \turnRight\end{tabular} &
\begin{tabular}{c}\moveForward \\ \moveForward\end{tabular} &
\begin{tabular}{c}\textbf{\moveForward} \\ \textbf{\moveForward}\end{tabular} \\

20 &
\begin{tabular}{c}\turnLeft \\ \turnLeft\end{tabular} &
\begin{tabular}{c}\turnLeft \\ \turnLeft\end{tabular} &
\begin{tabular}{c}\turnLeft \\ \turnLeft\end{tabular} &
\begin{tabular}{c}\moveForward \\ \moveForward\end{tabular} &
\begin{tabular}{c}\moveForward \\ \moveForward\end{tabular} \\

\hline

\end{tabular}
\caption{\textbf{Cooperative Box Pushing}: Joint policies over time. Top action in each cell: Agent~$1$, bottom action: Agent~$2$.}
\label{table:cooperative-boxpushing-reactive-policy}
\end{table*}

%Mars Rovers reactive policy
\subsection{Cooperative Box Pushing Reactive Policy analysis}

\newcommand\soneNS{Pos1-NS}
\newcommand\stwoNS{Pos2-NS}
\newcommand\sthreeNS{Pos3-NS}
\newcommand\sfourNS{Pos4-NS}
\newcommand\soneS{Pos1-S}
\newcommand\stwoS{Pos2-S}
\newcommand\sthreeS{Pos3-S}
\newcommand\sfourS{Pos-S}

\newcommand\RSsample{SAMPLE}
\newcommand\RSup{UP}
\newcommand\RSleft{LEFT}
\newcommand\RSright{RIGHT}
\newcommand\RSdown{DOWN}
\newcommand\RSdrill{DRILL}

This domain consists of two agents in a $2\times2$ grid. The agents can move \RSup\ , \RSdown\ , \RSleft\ , \RSright\ , take a \RSsample\ or \RSdrill\ at any of the $4$ locations. Two of the sites require only one agent to \RSsample\ them while two of the sites require both agents to coordinate and \RSdrill\ at the same time in order to get the maximum reward.

At any time the agents can observe their current position in addition to whether an experiment ( \RSsample\ or \RSdrill\ ) had previously been carried out at that particular location. The NS-suffix in the observation denotes that an experiment has not been done on the site and S-suffix denotes that an experiment has been done. If a site needs to be \RSsample\ , but \RSdrill\ is done instead,  then it incurs a large penalty for ruining the sample, whereas doing the opposite gives a small positive reward. Performing an experiment at any location more than once incurs negative rewards.

We present our learned RS-CPI reactive policy which is near-optimal in Table~\ref{table:mars-rovers-reactive-policy}. A similar approach can be followed to the Cooperative Box Pushing example to analyze the policy given in Table~\ref{table:mars-rovers-reactive-policy}.

% \begin{figure}
%   \centering
%   \includegraphics[width=0.35\textwidth]{figures/boxpushing-illustration.png}
%   \caption{Cooperative Box Pushing (Image take from \citet{seuken2012improved})}
% \end{figure}

% \begin{figure}[H] 
%     \centering
%     \includegraphics[width=0.3\textwidth]{figures/boxpushing-illustration.png}
%     \caption{Cooperative Box Pushing (Image take from \citet{seuken2012improved}}
%     \label{fig:boxpushing}
% \end{figure}

\begin{table*}
\centering
\scriptsize
\setlength{\tabcolsep}{3pt}
\renewcommand{\arraystretch}{1.25}

\begin{tabular}{c|cccccccc}
\hline
$t$ &
\soneNS & \stwoNS & \sthreeNS & \sfourNS &
\soneS & \stwoS & \sthreeS & \sfourS \\
\hline

1  &
\begin{tabular}{c}\RSsample\\\RSright\end{tabular} &
\begin{tabular}{c}\RSup\\\RSup\end{tabular} &
\begin{tabular}{c}\RSup\\\RSup\end{tabular} &
\begin{tabular}{c}\RSup\\\RSup\end{tabular} &
\begin{tabular}{c}\RSup\\\RSup\end{tabular} &
\begin{tabular}{c}\RSup\\\RSup\end{tabular} &
\begin{tabular}{c}\RSup\\\RSup\end{tabular} &
\begin{tabular}{c}\RSup\\\RSup\end{tabular} \\

2  &
\begin{tabular}{c}\RSsample\\\RSsample\end{tabular} &
\begin{tabular}{c}\RSsample\\\RSup\end{tabular} &
\begin{tabular}{c}\RSleft\\\RSsample\end{tabular} &
\begin{tabular}{c}\RSup\\\RSup\end{tabular} &
\begin{tabular}{c}\RSdown\\\RSright\end{tabular} &
\begin{tabular}{c}\RSup\\\RSup\end{tabular} &
\begin{tabular}{c}\RSup\\\RSup\end{tabular} &
\begin{tabular}{c}\RSup\\\RSup\end{tabular} \\

3  &
\begin{tabular}{c}\RSsample\\\RSright\end{tabular} &
\begin{tabular}{c}\RSsample\\\RSsample\end{tabular} &
\begin{tabular}{c}\RSleft\\\RSsample\end{tabular} &
\begin{tabular}{c}\RSleft\\\RSup\end{tabular} &
\begin{tabular}{c}\RSdown\\\RSright\end{tabular} &
\begin{tabular}{c}\RSup\\\RSdown\end{tabular} &
\begin{tabular}{c}\RSleft\\\RSdown\end{tabular} &
\begin{tabular}{c}\RSdown\\\RSup\end{tabular} \\

4  &
\begin{tabular}{c}\RSsample\\\RSsample\end{tabular} &
\begin{tabular}{c}\RSsample\\\RSright\end{tabular} &
\begin{tabular}{c}\RSleft\\\RSsample\end{tabular} &
\begin{tabular}{c}\RSleft\\\RSsample\end{tabular} &
\begin{tabular}{c}\RSdown\\\RSright\end{tabular} &
\begin{tabular}{c}\RSright\\\RSright\end{tabular} &
\begin{tabular}{c}\RSleft\\\RSdown\end{tabular} &
\begin{tabular}{c}\RSdown\\\RSup\end{tabular} \\

5  &
\begin{tabular}{c}\RSdrill\\\RSdrill\end{tabular} &
\begin{tabular}{c}\RSdown\\\RSup\end{tabular} &
\begin{tabular}{c}\RSup\\\RSdown\end{tabular} &
\begin{tabular}{c}\RSdown\\\RSdown\end{tabular} &
\begin{tabular}{c}\RSdown\\\RSright\end{tabular} &
\begin{tabular}{c}\RSright\\\RSright\end{tabular} &
\begin{tabular}{c}\RSleft\\\RSdown\end{tabular} &
\begin{tabular}{c}\RSup\\\RSdown\end{tabular} \\

6  &
\begin{tabular}{c}\RSdown\\\RSsample\end{tabular} &
\begin{tabular}{c}\RSsample\\\RSright\end{tabular} &
\begin{tabular}{c}\RSsample\\\RSsample\end{tabular} &
\begin{tabular}{c}\RSdrill\\\RSdrill\end{tabular} &
\begin{tabular}{c}\RSdown\\\RSright\end{tabular} &
\begin{tabular}{c}\RSsample\\\RSright\end{tabular} &
\begin{tabular}{c}\RSsample\\\RSdown\end{tabular} &
\begin{tabular}{c}\RSup\\\RSleft\end{tabular} \\

7  &
\begin{tabular}{c}\RSup\\\RSup\end{tabular} &
\begin{tabular}{c}\RSsample\\\RSsample\end{tabular} &
\begin{tabular}{c}\RSleft\\\RSsample\end{tabular} &
\begin{tabular}{c}\RSdown\\\RSsample\end{tabular} &
\begin{tabular}{c}\RSdown\\\RSright\end{tabular} &
\begin{tabular}{c}\RSright\\\RSup\end{tabular} &
\begin{tabular}{c}\RSleft\\\RSdown\end{tabular} &
\begin{tabular}{c}\RSdown\\\RSup\end{tabular} \\

8  &
\begin{tabular}{c}\RSdrill\\\RSdrill\end{tabular} &
\begin{tabular}{c}\RSdown\\\RSdown\end{tabular} &
\begin{tabular}{c}\RSsample\\\RSup\end{tabular} &
\begin{tabular}{c}\RSdown\\\RSdown\end{tabular} &
\begin{tabular}{c}\RSdown\\\RSright\end{tabular} &
\begin{tabular}{c}\RSright\\\RSright\end{tabular} &
\begin{tabular}{c}\RSleft\\\RSdown\end{tabular} &
\begin{tabular}{c}\RSup\\\RSup\end{tabular} \\

9  &
\begin{tabular}{c}\RSdrill\\\RSdrill\end{tabular} &
\begin{tabular}{c}\RSsample\\\RSsample\end{tabular} &
\begin{tabular}{c}\RSsample\\\RSsample\end{tabular} &
\begin{tabular}{c}\RSdrill\\\RSdrill\end{tabular} &
\begin{tabular}{c}\RSdown\\\RSright\end{tabular} &
\begin{tabular}{c}\RSsample\\\RSright\end{tabular} &
\begin{tabular}{c}\RSleft\\\RSdown\end{tabular} &
\begin{tabular}{c}\RSdown\\\RSup\end{tabular} \\

10 &
\begin{tabular}{c}\RSdrill\\\RSdrill\end{tabular} &
\begin{tabular}{c}\RSsample\\\RSright\end{tabular} &
\begin{tabular}{c}\RSleft\\\RSsample\end{tabular} &
\begin{tabular}{c}\RSdown\\\RSdown\end{tabular} &
\begin{tabular}{c}\RSdown\\\RSright\end{tabular} &
\begin{tabular}{c}\RSright\\\RSright\end{tabular} &
\begin{tabular}{c}\RSleft\\\RSdown\end{tabular} &
\begin{tabular}{c}\RSdown\\\RSdown\end{tabular} \\

11 &
\begin{tabular}{c}\RSsample\\\RSright\end{tabular} &
\begin{tabular}{c}\RSdown\\\RSright\end{tabular} &
\begin{tabular}{c}\RSleft\\\RSdown\end{tabular} &
\begin{tabular}{c}\RSdown\\\RSdown\end{tabular} &
\begin{tabular}{c}\RSdown\\\RSright\end{tabular} &
\begin{tabular}{c}\RSright\\\RSright\end{tabular} &
\begin{tabular}{c}\RSdown\\\RSdown\end{tabular} &
\begin{tabular}{c}\RSdown\\\RSup\end{tabular} \\

12 &
\begin{tabular}{c}\RSsample\\\RSright\end{tabular} &
\begin{tabular}{c}\RSsample\\\RSup\end{tabular} &
\begin{tabular}{c}\RSdown\\\RSsample\end{tabular} &
\begin{tabular}{c}\RSdrill\\\RSdrill\end{tabular} &
\begin{tabular}{c}\RSdown\\\RSright\end{tabular} &
\begin{tabular}{c}\RSsample\\\RSup\end{tabular} &
\begin{tabular}{c}\RSsample\\\RSdown\end{tabular} &
\begin{tabular}{c}\RSup\\\RSup\end{tabular} \\

13 &
\begin{tabular}{c}\RSdrill\\\RSdrill\end{tabular} &
\begin{tabular}{c}\RSdown\\\RSright\end{tabular} &
\begin{tabular}{c}\RSup\\\RSup\end{tabular} &
\begin{tabular}{c}\RSleft\\\RSdown\end{tabular} &
\begin{tabular}{c}\RSdown\\\RSright\end{tabular} &
\begin{tabular}{c}\RSright\\\RSright\end{tabular} &
\begin{tabular}{c}\RSdown\\\RSdown\end{tabular} &
\begin{tabular}{c}\RSup\\\RSup\end{tabular} \\

14 &
\begin{tabular}{c}\RSdrill\\\RSdrill\end{tabular} &
\begin{tabular}{c}\RSsample\\\RSsample\end{tabular} &
\begin{tabular}{c}\RSsample\\\RSsample\end{tabular} &
\begin{tabular}{c}\RSdrill\\\RSdrill\end{tabular} &
\begin{tabular}{c}\RSright\\\RSdown\end{tabular} &
\begin{tabular}{c}\RSsample\\\RSright\end{tabular} &
\begin{tabular}{c}\RSsample\\\RSdown\end{tabular} &
\begin{tabular}{c}\RSup\\\RSup\end{tabular} \\

15 &
\begin{tabular}{c}\RSdrill\\\RSdrill\end{tabular} &
\begin{tabular}{c}\RSup\\\RSsample\end{tabular} &
\begin{tabular}{c}\RSsample\\\RSsample\end{tabular} &
\begin{tabular}{c}\RSdown\\\RSdown\end{tabular} &
\begin{tabular}{c}\RSright\\\RSdown\end{tabular} &
\begin{tabular}{c}\RSright\\\RSright\end{tabular} &
\begin{tabular}{c}\RSdown\\\RSdown\end{tabular} &
\begin{tabular}{c}\RSup\\\RSdown\end{tabular} \\

16 &
\begin{tabular}{c}\RSsample\\\RSdown\end{tabular} &
\begin{tabular}{c}\RSup\\\RSright\end{tabular} &
\begin{tabular}{c}\RSup\\\RSdown\end{tabular} &
\begin{tabular}{c}\RSdown\\\RSdown\end{tabular} &
\begin{tabular}{c}\RSright\\\RSdown\end{tabular} &
\begin{tabular}{c}\RSright\\\RSright\end{tabular} &
\begin{tabular}{c}\RSdown\\\RSdown\end{tabular} &
\begin{tabular}{c}\RSdown\\\RSleft\end{tabular} \\

17 &
\begin{tabular}{c}\RSup\\\RSup\end{tabular} &
\begin{tabular}{c}\RSup\\\RSsample\end{tabular} &
\begin{tabular}{c}\RSsample\\\RSdown\end{tabular} &
\begin{tabular}{c}\RSdrill\\\RSdrill\end{tabular} &
\begin{tabular}{c}\RSright\\\RSdown\end{tabular} &
\begin{tabular}{c}\RSsample\\\RSright\end{tabular} &
\begin{tabular}{c}\RSsample\\\RSdown\end{tabular} &
\begin{tabular}{c}\RSup\\\RSleft\end{tabular} \\

18 &
\begin{tabular}{c}\RSdrill\\\RSdrill\end{tabular} &
\begin{tabular}{c}\RSsample\\\RSdown\end{tabular} &
\begin{tabular}{c}\RSup\\\RSsample\end{tabular} &
\begin{tabular}{c}\RSdown\\\RSdown\end{tabular} &
\begin{tabular}{c}\RSright\\\RSdown\end{tabular} &
\begin{tabular}{c}\RSright\\\RSright\end{tabular} &
\begin{tabular}{c}\RSdown\\\RSdown\end{tabular} &
\begin{tabular}{c}\RSleft\\\RSleft\end{tabular} \\

19 &
\begin{tabular}{c}\RSup\\\RSup\end{tabular} &
\begin{tabular}{c}\RSsample\\\RSsample\end{tabular} &
\begin{tabular}{c}\RSsample\\\RSsample\end{tabular} &
\begin{tabular}{c}\RSdrill\\\RSdrill\end{tabular} &
\begin{tabular}{c}\RSdown\\\RSright\end{tabular} &
\begin{tabular}{c}\RSup\\\RSright\end{tabular} &
\begin{tabular}{c}\RSsample\\\RSdown\end{tabular} &
\begin{tabular}{c}\RSleft\\\RSleft\end{tabular} \\

20 &
\begin{tabular}{c}\RSdrill\\\RSdrill\end{tabular} &
\begin{tabular}{c}\RSsample\\\RSsample\end{tabular} &
\begin{tabular}{c}\RSsample\\\RSsample\end{tabular} &
\begin{tabular}{c}\RSsample\\\RSsample\end{tabular} &
\begin{tabular}{c}\RSup\\\RSup\end{tabular} &
\begin{tabular}{c}\RSup\\\RSup\end{tabular} &
\begin{tabular}{c}\RSup\\\RSup\end{tabular} &
\begin{tabular}{c}\RSup\\\RSup\end{tabular} \\

\hline
\end{tabular}

\caption{\textbf{Mars Rovers}: Joint policies over time. Top action in each cell: Agent~$1$, bottom action: Agent~$2$.}
\label{table:mars-rovers-reactive-policy}
\end{table*}

\section{Proof of Proposition~\ref{prop:improvement}} \label{app:main-proof}

\newcommand\SMALLVEC[2]{\left[ \begin{smallmatrix} 
    #1 \\ #2
  \end{smallmatrix} \right]}

\newcommand\OBS{\SMALLVEC{y}{z_{-}}}
\newcommand\OBSi{\SMALLVEC{y^i}{z^i_{-}}}
\newcommand\OBSj{\SMALLVEC{y^{-i}}{z^{-i}_{-}}}
\newcommand\NOBS{\SMALLVEC{y_{+}}{z}}

\newcommand\ACT{\SMALLVEC{a}{z}}
\newcommand\ACTi{\SMALLVEC{a^i}{z^i}}
\newcommand\ACTj{\SMALLVEC{a^{-i}}{z^{-i}}}
\newcommand\NACT{\SMALLVEC{a_{+}}{z_{+}}}

\textbf{Risk-seeking centralized value and action-value functions.} We start by defining centralized value and action-value functions for risk-seeking evaluation of a policy $\pi$ with risk parameter $\lambda$, $\lambda > 0$. 
\begin{align*}
    Q^{\vec \pi_{\lambda,T}}_{\lambda, T}\bigl(s, \OBS, \ACT\bigr) &= r(s,a)
    \\
    \intertext{and for $t \in \{T-1, \dots, 1\}$, we have}
    Q^{\vec \pi_t}_{\lambda,t}\bigl(s, \OBS, \ACT\bigr) &= r(s,a) 
    + 
    \tfrac{1}{\lambda}
    \log \biggl[
    \sum_{s_{+}, y_{+}} P(s_{+}, y_{+} \mid s, a) 
    \exp\bigl( 
    \lambda V^{\pi_{t+1}\vec \pi_{t+1}}_{\lambda, t+1}\bigl(s_{+}, \NOBS\bigr)
    \bigr)
    \biggr]
    \\
    \shortintertext{where}
    V^{\pi_t, \vec \pi_t}_{\lambda,t}\bigl(s, \OBS \bigr)
    &= \tfrac {1}{\lambda} \log \biggl[ \sum_{a,z} \pi_t(a,z \mid y, z_{-}) 
    \exp\bigl(\lambda Q^{\vec \pi_t}_{\lambda,t}\bigl( s, \OBS, \ACT \bigr) \bigr)
    \biggr]
    \\
    % \shortintertext{\red{Then}}
    % \red{Q^{\vec \pi_t}_{\lambda,t}\bigl(s, \OBS, \ACT\bigr)} &
    % \red{= r(s,a) 
    % + 
    % \tfrac{1}{\lambda}
    % \log \biggl[
    % \sum_{s_{+}, y_{+}, a_{+}, z_{+}} P(s_{+}, y_{+} \mid s, a) \pi_{t+1}(a_{+}, z_{+} \mid y_{+}, z)
    % \exp\bigl( 
    % \lambda Q^{\pi_{t+1}\vec \pi_{t+1}}_{\lambda, t+1}\bigl(s_{+}, \NOBS, \NACT\bigr)
    % \bigr)
    % \biggr]}
\end{align*}

Based on this risk-seeking policy evaluation, define a risk-seeking performance as
\[
    J_{1:T}(\boldsymbol{\pi}) =
    \REXP\biggl[ \sum_{t=1}^T r(S_t, A_t) \biggr] 
    =
    \sum_{s_1, y_1, z_0} \zeta_1(s_1, y_1) \phi(z_0) V^{\boldsymbol{\pi}}_{\lambda,1}\bigl(s_1, \SMALLVEC{y_1}{z_0} \bigr)
\]
where $\phi(z_0) = \phi^1(z^1_0) \phi^2(z^2_0)$ (recall that $\phi^i$ is the distribution of the initial agent state of agent~$i$).

% We also define the expected risk-seeking performance from time~$t$ onward as follows:
% \[
%    J_{\lambda}_{t:T}(\cev \pi_t, \pi_t, \vec \pi_t) =
%    \sum_{s_t, y_t, z_{t-1}, a_t, z_t} \zeta^{\cev \pi_t}_t(s_t, y_t, z_t
% \]

Recall that the local average risk-seeking action-value functions is defined as
\begin{align*}
    \bar Q^{(\cev \pi_t, \pi^{-i}_t, \vec \pi_t)}_{\lambda,t}\Bigl(\OBSi, \ACTi\Bigr) &=
    \tfrac {1}{\lambda} \log \biggl[ 
    \sum_{s, y^{-i}, z^{-i}_{-}, a^{-i}, z^{-i}}
    \zeta^{\cev \pi_t, \pi^{-i}_t}_t\Bigl( s, \OBSj, \ACTj \Bigm| \OBSi\Bigr)
    \exp\bigl( \lambda Q^{\vec \pi_t}_{\lambda,t}\bigl(s, \OBS, \ACT\bigr)\bigr)
    \biggr]
\end{align*}

%%% USE THE SAME MACROS!!
\renewcommand\OBSi{\SMALLVEC{y^1}{z^1_{-}}}
\renewcommand\OBSj{\SMALLVEC{y^{2}}{z^{2}_{-}}}

\renewcommand\ACTi{\SMALLVEC{a^1}{z^1}}
\renewcommand\ACTj{\SMALLVEC{a^{2}}{z^{2}}}

    By definition of $\bar \pi^2_t$ (which is deterministic), for all $(y^2, z^2_{-}) \in \ALPHABET Y \times \ALPHABET Z$, 
    \[
       \bar Q^{(\cev \pi_t, \pi^1_t, \vec \pi_t)}_{\lambda, t} \Bigl( \OBSj, \bar \pi^2_t\Bigl( \OBSj \Bigr) \Bigr) 
       \ge 
       \bar Q^{(\cev \pi_t, \pi^1_t, \vec \pi_t)}_{\lambda, t} \Bigl( \OBSj, \ACTj \Bigr), 
       \quad \forall \ACTj \in \ALPHABET{A}^2 \times \ALPHABET{Z}^2
    \]
    Moreover, since $\lambda > 0$, 
    \[
     \exp\Bigl( \lambda  
       \bar Q^{(\cev \pi_t, \pi^1_t, \vec \pi_t)}_{\lambda, t} \Bigl( \OBSj, \bar \pi^2_t\Bigl( \OBSj \Bigr) \Bigr) 
    \Bigr)
       \ge 
     \exp\Bigl( \lambda  
       \bar Q^{(\cev \pi_t, \pi^1_t, \vec \pi_t)}_{\lambda, t} \Bigl( \OBSj, \ACTj \Bigr)
    \Bigr),
       \quad \forall \ACTj \in \ALPHABET{A}^2 \times \ALPHABET{Z}^2
    \]
    Hence,
    \[
     \exp\Bigl( \lambda  
       \bar Q^{(\cev \pi_t, \pi^1_t, \vec \pi_t)}_{\lambda, t} \Bigl( \OBSj, \bar \pi^2_t\Bigl( \OBSj \Bigr) \Bigr) 
    \Bigr)
       \ge
            \sum_{a^2,z^2} \pi^2_t \Bigl( \ACTj \Bigm| \OBSj \Bigr) 
     \exp\Bigl( \lambda  
       \bar Q^{(\cev \pi_t, \pi^1_t, \vec \pi_t)}_{\lambda, t} \Bigl( \OBSj, \ACTj \Bigr)
    \Bigr)
    \]

    Substituting the definition of the averaged action-value functions, we get
    \begin{align}
        & 
    \sum_{s, y^{1}, z^{1}_{-}, a^{1}, z^{1}}
    \zeta^{\cev \pi_t, \pi^{1}_t}_t\Bigl( s, \OBSi, \ACTi \Bigm| \OBSj\Bigr)
    \exp\Bigl( \lambda Q^{\vec \pi_t}_{\lambda,t}\Bigl(s, \OBSi, \OBSj, \ACTi, \bar \pi^2_t\Bigl(\OBSj\Bigr)\Bigr)\Bigr)
    \notag \\
    &\ge
     \sum_{a^2,z^2} \pi^2_t \Bigl( \ACTj \Bigm| \OBSj \Bigr) 
    \sum_{s, y^{1}, z^{1}_{-}, a^{1}, z^{1}}
    \zeta^{\cev \pi_t, \pi^{1}_t}_t\Bigl( s, \OBSi, \ACTi \Bigm| \OBSj\Bigr)
    \exp\Bigl( \lambda Q^{\vec \pi_t}_{\lambda,t}\Bigl(s, \OBSi, \OBSj, \ACTi, \ACTj\Bigr)\Bigr)
    \notag \\
    &=
    \sum_{s, y^{1}, z^{1}_{-}, a^{1}, z^{1}, a^2, z^2}
    \zeta^{\cev \pi_t, \pi^{1}_t, \pi^2_t}_t\Bigl( s, \OBSi, \ACTi, \ACTj \Bigm| \OBSj\Bigr)
    \exp\Bigl( \lambda Q^{\vec \pi_t}_{\lambda,t}\Bigl(s, \OBSi, \OBSj, \ACTi, \ACTj\Bigr)\Bigr)
    \label{eq:ineq-1}
    \end{align}
    Observe that
    \[
    \zeta^{\cev \pi_t, \pi^{1}_t, \pi^2_t}_t\Bigl( s, \OBSi, \ACTi, \ACTj \Bigm| \OBSj\Bigr)
    =
    \frac{\zeta^{\cev \pi_t, \pi^1_t, \pi^2_t}_t\Bigl( s, \OBSi, \OBSj, \ACTi, \ACTj\Bigr)}
         {\zeta^{\cev \pi_t}_t\Bigl( \OBSj \Bigr)}.
    \]
    Substituting this in~\eqref{eq:ineq-1} and canceling the denominator $\zeta^{\vec \pi_t}_t(y^2,z^2_{-})$ from both sides, we get
    \begin{align}
        & 
    \sum_{s, y^{1}, z^{1}_{-}, a^{1}, z^{1}, a^2, z^2}
    \zeta^{\cev \pi_t, \pi^{1}_t, \bar \pi^2_t}_t\Bigl( s, \OBSi, \OBSj, \ACTi, \ACTj \Bigr)
    \exp\Bigl( \lambda Q^{\vec \pi_t}_{\lambda,t}\Bigl(s, \OBSi, \OBSj, \ACTi, \ACTj\Bigr)\Bigr)
    \notag \\
    &\ge
    \sum_{s, y^{1}, z^{1}_{-}, a^{1}, z^{1}, a^2, z^2}
    \zeta^{\cev \pi_t, \pi^{1}_t, \pi^2_t}_t\Bigl( s, \OBSi, \OBSj, \ACTi, \ACTj \Bigr)
    \exp\Bigl( \lambda Q^{\vec \pi_t}_{\lambda,t}\Bigl(s, \OBSi, \OBSj, \ACTi, \ACTj\Bigr)\Bigr)
    \label{eq:ineq-2}
    \end{align}
    where in the LHS we are using the fact that $\bar \pi^2_t$ is a deterministic policy and therefore assigns positive weight to only $\bar \pi^2_t(y^2,z^2_{-})$.

    Recall that the expected risk-seeking performance from time~$t$ onward is given by
    \begin{align*}
        &J_{t:T}(\cev \pi_t, \pi^1_t, \pi^2_t, \vec \pi_t)
        =  \REXP\biggl[ \sum_{t=1}^T r(S_t, A_t) \biggr] 
        \\
    &= \tfrac 1{\lambda} \log \biggl[ 
    \sum_{s, y^{1}, z^{1}_{-}, a^{1}, z^{1}, a^2, z^2}
    \zeta^{\cev \pi_t, \pi^{1}_t, \pi^2_t}_t\Bigl( s, \OBSi, \OBSj, \ACTi, \ACTj \Bigr)
    \exp\Bigl( \lambda Q^{\vec \pi_t}_{\lambda,t}\Bigl(s, \OBSi, \OBSj, \ACTi, \ACTj\Bigr)\Bigr)
    \biggr]
    \end{align*}
    Then, since $\lambda > 0$, \eqref{eq:ineq-2} implies that 
    \[
        J_{t:T}(\cev \pi_t, \pi^1_t, \bar \pi^2_t, \vec \pi_t)
        \ge 
        J_{t:T}(\cev \pi_t, \pi^1_t, \pi^2_t, \vec \pi_t).
    \] 

    The result then follows from the observation that for any policy $\pi$
    \[
        J_{1:T}(\pi) = J_{1:t-1}(\cev \pi_t) + J_{t:T}(\cev \pi_t, \pi_t, \vec \pi_t).
    \]

\section{Proof of Corollary~\ref{cor:improvement}} \label{app:cor-proof}

\renewcommand\OBS{\SMALLVEC{y}{z_{-}}}
\renewcommand\OBSi{\SMALLVEC{y^i}{z^i_{-}}}
\renewcommand\OBSj{\SMALLVEC{y^{-i}}{z^{-i}_{-}}}
\renewcommand\NOBS{\SMALLVEC{y_{+}}{z}}

\renewcommand\ACT{\SMALLVEC{a}{z}}
\renewcommand\ACTi{\SMALLVEC{a^i}{z^i}}
\renewcommand\ACTj{\SMALLVEC{a^{-i}}{z^{-i}}}
\renewcommand\NACT{\SMALLVEC{a_{+}}{z_{+}}}

\begin{align*}
    \pi^{i, \alpha}_t \in \GREEDY^{i, \alpha}_t(\boldsymbol{\pi}) \coloneqq (1-\alpha) \pi^i_t + \alpha \GREEDY^i_t(\boldsymbol{\pi}).
\end{align*}

\begin{align*}
    J_{t:T}(\cev \pi_t,  \pi^{i, \alpha}_t, \pi^{-i}_t, \vec \pi_t) &= (1-\alpha) \sum_{s, y, z_{-}, a, z} 
    \zeta^{\cev \pi_t, \pi^{-i}_t, \bar \pi^i_t}_t\Bigl( s, \ACT \Bigm| \OBS \Bigr)
    \exp\bigl( \lambda Q^{\vec \pi_t}_{\lambda,t}\bigl(s, \OBS, \ACT \bigr)\bigr) \\
    &\quad + \alpha \sum_{s, y, z_{-}, a, z} \zeta^{\cev \pi_t, \pi^{-i}_t, \pi^i_t}_t\Bigl( s, \ACT \Bigm| \OBS \Bigr)
    \exp\bigl( \lambda Q^{\vec \pi_t}_{\lambda,t}\bigl(s, \OBS, \ACT \bigr)\bigr)
\end{align*}
Since we have that
\begin{align*}
J_{t:T}(\cev \pi_t, \bar \pi^i_t, \pi^{-i}_t, \vec \pi_t)
\ge
J_{t:T}(\cev \pi_t, \pi_t, \vec \pi_t),
\end{align*}
we have by linearity that 
\begin{align*}
J_{t:T}(\cev \pi_t,  \pi^{i, \alpha}_t, \pi^{-i}_t, \vec \pi_t)
\ge
J_{t:T}(\cev \pi_t, \pi_t, \vec \pi_t).
\end{align*}

\section{Proof of Proposition \ref{prop:convergence}} \label{app:prop-convg-proof}

Corollary~\ref{cor:improvement} implies that $\{ J_{1:T}(\boldsymbol{\pi}^{(k)}) \}_{k \ge 0}$ is a non-decreasing sequence which is uniformly bounded (since the per-step rewards are bounded). Therefore, the sequence $J_{1:T}(\boldsymbol{\pi}^{(k)})$ converges. When conservatism is absent (i.e., $\alpha^{(k)} = 1$ for all $k$), the policies $\{ \boldsymbol{\pi}^{(k)} \}_{k \ge 1}$ are deterministic (assume that ties in $\argmax$ are broken by an arbitrary but fixed rule). Since there are a finite number of deterministic policies, the convergence must happen in a finite number of steps.

We now show that the limit policy will satisfy~\eqref{eq:local-optimality}. Using Corollary~\ref{cor:improvement},  we can show that for any $t$, $J_{t:T} (\boldsymbol{\pi}^{(k)})$ also converges to a limit. Therefore, Eq.~\eqref{eq:local-optimality} must hold (otherwise, we can show by contradiction that $J_{t:T} (\boldsymbol{\pi}^{(k)})$ has not converged).

\end{document}